\documentclass[11pt]{article}

\usepackage{acl}

\usepackage{times}
\usepackage{latexsym}
\usepackage[T1]{fontenc}
\usepackage[utf8]{inputenc}
\usepackage{microtype}
\usepackage{inconsolata}

\usepackage{graphicx}
\usepackage{amsmath,amssymb,amsfonts}
\usepackage{amsthm}
\usepackage{float}
\usepackage{algorithm}
\usepackage{algorithmic}
\usepackage{multirow}
\usepackage{booktabs}
\usepackage{colortbl}
\usepackage{xcolor}
\usepackage{subcaption}
\usepackage{tabularx}
\usepackage{makecell}
\usepackage{xspace}
\usepackage[shortlabels]{enumitem}

\definecolor{NUSBlue}{HTML}{003D7C}
\definecolor{NUSOrange}{HTML}{EF7C00}
\definecolor{MetaGray}{HTML}{4A5560}

\newtheorem{definition}{Definition}
\newtheorem{proposition}{Proposition}
\newtheorem{lemma}{Lemma}
\newtheorem{assumption}{Assumption}

\newcommand{\our}{CERES\xspace}
\newcommand{\Eqref}[1]{Eq.~\eqref{#1}}

\title{When Images Look Right and Retrieve Wrong: Coverage-Guided Cross-Scale Re-Indexing for Knowledge-Faithful Generative Perception}
\author{
{\fontsize{9.9}{11.4}\selectfont\bfseries\color{NUSBlue}
Guangyuan Dong\textsuperscript{1,\textcolor{NUSOrange}{*},\textcolor{NUSOrange}{\textdaggerdbl}} \quad
Chuang Liu\textsuperscript{2,\textcolor{NUSOrange}{\textdaggerdbl}} \quad
Haoyu Wang\textsuperscript{3,\textcolor{NUSOrange}{*}} \quad
Yangchen Zeng\textsuperscript{4,\textcolor{NUSOrange}{\textdaggerdbl}}} \\
{\fontsize{9.9}{11.4}\selectfont\bfseries\color{NUSBlue}
Jiaqi Zhang\textsuperscript{10,\textcolor{NUSOrange}{*}} \quad
Li Jiuxing\textsuperscript{1,\textcolor{NUSOrange}{*}} \quad
Xiaoyang Yu\textsuperscript{5} \quad
Pinlong Zhao\textsuperscript{6,\textcolor{NUSOrange}{\textdagger}} \quad
Yuchao Hou\textsuperscript{7,\textcolor{NUSOrange}{\textdagger}}} \\
{\fontsize{9.9}{11.4}\selectfont\bfseries\color{NUSBlue}
Ziwei Li\textsuperscript{8} \quad
Zheng Lin\textsuperscript{9,\textcolor{NUSOrange}{\textdagger}} \quad
Alexander Lim Han Yang\textsuperscript{1,\textcolor{NUSOrange}{\textdaggerdbl}} \quad
Yusen Wu\textsuperscript{7,\textcolor{NUSOrange}{\textdagger}}} \\[0.22em]
{\fontsize{8.3}{9.6}\selectfont\color{MetaGray}
\textsuperscript{1}NUS \quad
\textsuperscript{2}WHU \quad
\textsuperscript{3}NKU \quad
\textsuperscript{4}SEU \quad
\textsuperscript{5}JD Research \quad
\textsuperscript{6}ZJU \quad
\textsuperscript{7}Independent Researcher} \\
{\fontsize{8.3}{9.6}\selectfont\color{MetaGray}
\textsuperscript{8}KAUST \quad
\textsuperscript{9}HKU \quad
\textsuperscript{10}JSU} \\[0.16em]
{\fontsize{8.2}{9.5}\selectfont\color{MetaGray}
\textcolor{NUSOrange}{*} Equal contribution. \quad
\textcolor{NUSOrange}{\textdagger} Corresponding authors. \quad
\textcolor{NUSOrange}{\textdaggerdbl} Project leads.} \\[0.02em]
{\fontsize{8.4}{9.8}\selectfont\color{MetaGray}
\textbf{Email:} \href{mailto:guangyuand@acm.org}{\textcolor{NUSBlue}{guangyuand@acm.org}}}
}

\makeatletter
\AtBeginDocument{%
\renewcommand{\@maketitle}{%
\vbox to \titlebox{\hsize\textwidth
  \linewidth\hsize
  \vskip 0.04in
  \centering
  \makebox[0.94\textwidth][c]{%
    \includegraphics[height=9.2mm]{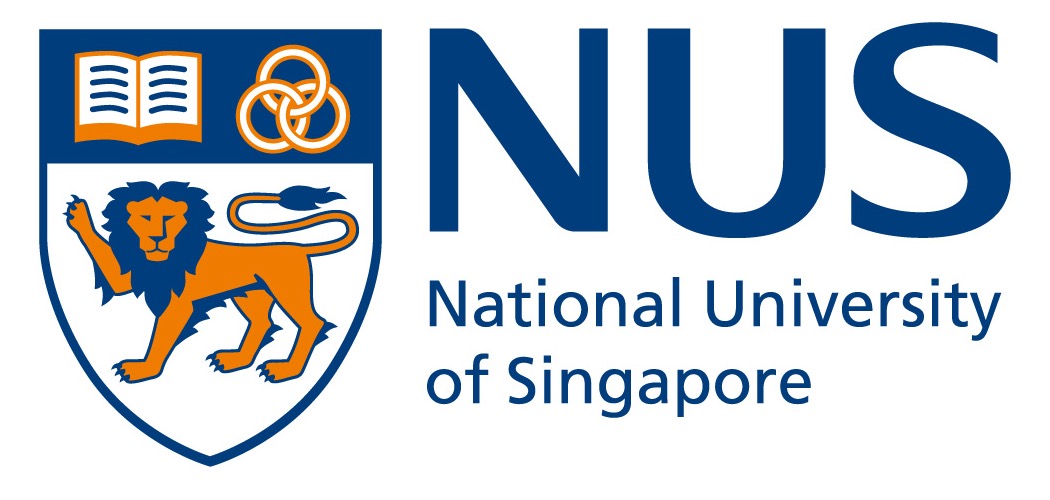}\hfill
    \includegraphics[height=10.4mm]{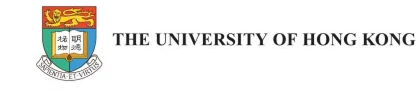}\hfill
    \includegraphics[height=9.2mm]{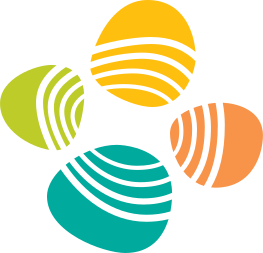}\hfill
    \includegraphics[height=9.2mm]{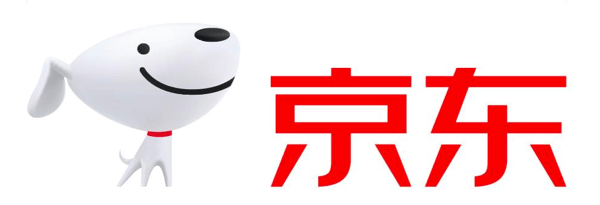}}\\[-0.10em]
  \textcolor{NUSOrange}{\rule{0.94\textwidth}{0.7pt}}\\[0.30em]
  {\fontsize{17.4}{19.4}\selectfont\bfseries\color{NUSBlue}\@title\par}
  \vspace{0.34em}
  {\@author\par}
  \vfill
}}
}
\makeatother

\begin{document}
\maketitle

\begin{abstract}
Multimodal information systems increasingly route generated visual
content back through the same vision--language index that informed its
production, so the output must remain retrievable by the queries it
was meant to serve.
When the scene contains entities at vastly different scales, existing
language-guided generators condition on a single, globally pooled
text embedding and quietly drop scale-specific concepts, breaking
concept-query retrieval even when pixel fidelity is high.
We formalise this failure as \emph{semantic collapse} and propose
\our, a closed-loop multimodal indexing framework that builds a
three-level semantic pyramid, mines implicit concepts via a
co-occurrence-aware router, performs scale-routed cross-attention
into a lightweight U-Net generator, and verifies coverage by
re-indexing the generated image with the same frozen VLM.
A continuously differentiable soft-Jaccard coverage objective returns
dense gradients to the 0.39\,M-parameter generator under explicit
non-degeneracy conditions, and coverage is verified by an independent
DINOv2 linear probe trained only on external scene and object labels.
On four pansharpening benchmarks across seven settings, \our delivers
the new state of the art with the largest gains where scale variation
is most extreme ($+4.64$\% relative Q2n and $+9.7$ mAP for DOTA
detection). It also improves concept-query retrieval Recall@5 by
$+14.0$ points and image-text mean reciprocal rank by $0.19$ over the
strongest baseline, showing that the closed loop preserves
\emph{queryable} content rather than self-referential feature
consistency.
\end{abstract}

\section{Introduction}\label{sec:intro}

Multimodal information systems are increasingly tasked with managing
\emph{generative} content. Catalog re-indexing, content-aware remote
sensing, medical imaging archives, and personalised digital twins must
not merely retrieve an image; they must generate a higher-resolution,
semantically faithful reconstruction and then re-insert it into the
same knowledge index that triggered the
generation~\citep{li2023blip2,rombach2022ldm,liu2025csa}.
Vision--language models (VLMs)~\citep{radford2021clip,zhai2023siglip,
tschannen2025siglip2} have become the de-facto index in this loop and
also serve as cross-modal
retrievers~\citep{lee2018scan,diao2021sgraf,zhang2022naaf,
huang2018learning} and differentiable priors for image
restoration~\citep{cheng2024clipdenoising,luo2024daclip,rao2022denseclip}.
The question we ask is concrete: after a VLM-guided generator emits
an image, can the same VLM still answer the local, meso, and global
queries that the index was meant to support? If not, the generator
has degraded the knowledge system regardless of pixel-level fidelity.

Existing VLM-conditioned generators index a scene through a single,
globally pooled embedding even when the scene is hierarchically
composed of multi-scale
entities~\citep{rao2022denseclip,zhou2022maskclip}.
A satellite image of an airport simultaneously contains tiny aircraft
($\sim 5$\,m), elongated runways ($\sim 100$\,m), and the overall
terminal--runway--road layout ($\sim 1$\,km). A whole-slide pathology
image nests cell nuclei, glandular structures, and tumour--stroma
boundaries. Collapsing this multi-granular index into one caption
forces the generator to over-attend to the macroscopic mode and
silently dispose of microscopic entities, even when pixel-level
fidelity remains high. Following the cross-modal retrieval literature
on coverage maximisation~\citep{lee2018scan,liu2025csa,
wang2020consensus}, we call this failure \emph{semantic collapse}.
Figure~\ref{fig:coverage_vis} formalises the coverage measurement
that exposes the gap and the concept-query probe that turns it into
a measurable queryability test.

Repairing semantic collapse requires three ingredients that no prior
system provides simultaneously. First, a multi-granular index must
replace the single global embedding so that local and meso-scale
concepts are explicitly represented. Second, the generator must route
concepts of different scales to decoder layers operating at
corresponding spatial resolutions. Third, the system must verify that
the generated content can still be re-indexed and re-queried by the
same VLM, returning continuous gradients that penalise lost concepts.
This closed-loop, retrieval-style consistency check goes beyond the
open-loop conditioning used in modern diffusion or restoration
models~\citep{rombach2022ldm,brooks2023ip2p,luo2024daclip}.
A purely pixel-level objective cannot supply this signal because the
metric itself is blind to the indexed concept
set~\citep{vivone2021new,arienzo2022hqnr}.

\textbf{\our} addresses all three ingredients
(Figure~\ref{fig:framework}). Stage~I constructs a three-level
semantic pyramid via two complementary branches: an explicit branch
with soft, differentiable membership over $K$ learned concept
centroids, and an implicit branch that mines relational concepts
(e.g.\ ``harbour\,=\,water\,+\,docks\,+\,vessels'') through a
co-occurrence-aware semantic router. Discovered units are merged by
an IoU-guided aggregator and filtered by an adaptive Gaussian density
gate. Stage~II routes each unit to a decoder layer indexed by its
scale tag and injects it through cross-scale attention and FiLM-style
modulation~\citep{perez2018film,park2019spade}. Stage~III re-encodes
the generated image with the same frozen
SigLIP-2~\citep{tschannen2025siglip2}. Stage~IV measures coverage via
a soft-Jaccard differentiable operator and returns scale-invariant
gradients to the 0.39\,M trainable generator.

To prevent the optimisation target from contaminating the evaluation
target, we verify coverage with an independent DINOv2 linear
probe~\citep{oquab2024dinov2} trained only on external scene-label
and object-bounding-box supervision, and we evaluate the generated
images under three downstream protocols (zero-shot, linear-probe,
trained head) plus a 60-query concept retrieval bank. The theory is
deliberately conservative: we state two sufficient-condition
propositions, one on a lower-bounded closed-loop gradient
(Proposition~\ref{prop:grad-suff}) and one linking semantic coverage
to spectral angular distortion (Proposition~\ref{prop:bound-suff}).
A non-asymptotic termination lemma supplies an explicit router
stopping bound (Lemma~\ref{lem:term}). The full derivations and
counter-examples appear in Appendix~\ref{app:proofs}.

\begin{figure}[t]
\centering
\includegraphics[width=0.96\columnwidth,height=4.2cm,keepaspectratio]{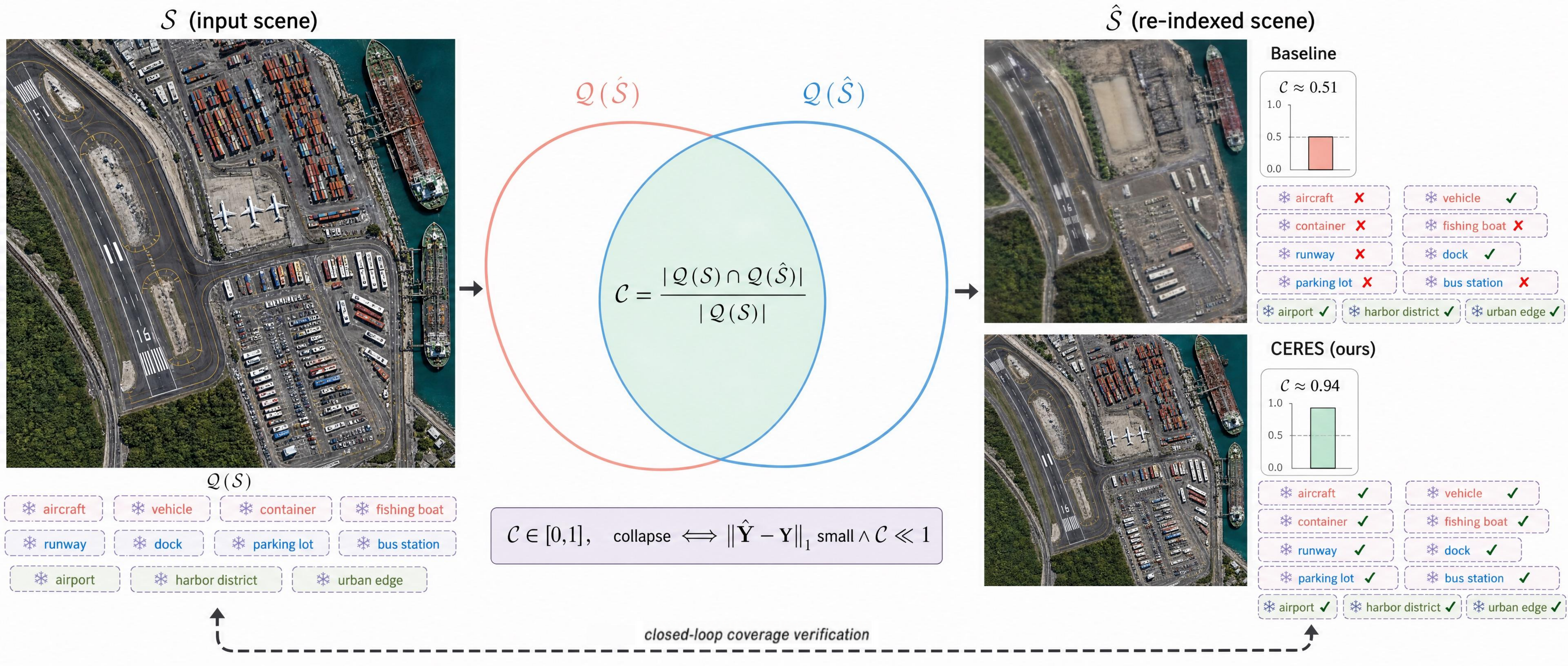}
\caption{Multimodal indexing view of semantic collapse.
A frozen VLM extracts $\mathcal{Q}(\mathcal{S})$ from the input (left),
the centre panel defines coverage $\mathcal{C}$ as the
intersection-over-input ratio of two concept sets, and the right
panel stacks two reconstructions of the same scene. The baseline
drops microscopic concepts ($\mathcal{C}\!\approx\!0.51$) while
\our preserves both macro and micro indices
($\mathcal{C}\!\approx\!0.94$). The dashed loop indicates closed-loop
coverage verification used by the retrieval evaluation of
Sec.~\ref{sec:retrieval}.}
\label{fig:coverage_vis}
\end{figure}

\textbf{Contributions.}
(i) We formalise semantic collapse as a measurable failure of
re-indexability and introduce a concept-query retrieval evaluation
that treats generated images as queryable knowledge.
(ii) We present \our, a closed-loop indexing framework that pairs a
soft, differentiable concept index with a soft-Jaccard coverage
verifier; the two sufficient-condition propositions serve as sanity
checks on the closed-loop design, and
Proposition~\ref{prop:grad-suff} additionally provides a principled
motivation for the warm-up schedule.
(iii) Across four pansharpening benchmarks, three downstream
perception modes, a 60-query retrieval bank, and an independent
DINOv2 label-grounded coverage probe, \our delivers state-of-the-art
generative fidelity together with the strongest concept-query
retrieval and downstream-perception gains.

\section{Related Work}\label{sec:related}

\textbf{Multimodal indexing and cross-modal retrieval.}
Stacked cross attention~\citep{lee2018scan}, similarity-graph
reasoning~\citep{diao2021sgraf}, and negative-aware
attention~\citep{zhang2022naaf} index visual regions and textual
tokens into a shared semantic space. Cross-scale alignment with
adaptive semantic aggregation~\citep{liu2025csa} introduces
scale-adaptable units through IoU merging, and earlier
semantic-concept work~\citep{huang2018learning,chen2020imram} mines
hierarchical concept indices for retrieval. Recent work further
balances semantic capacity across image and text~\citep{xu2026lever},
reasons over user context for personalised dense
retrieval~\citep{jiang2026thinktopersonalize}, and preserves visual
semantics or latent interests in generative
recommendation~\citep{zeng2026trialigngr,zeng2026deepinterestgr}.
TRACER uses token reassignment to unlearn target concepts while
preserving recommendation utility~\citep{chen2026tracer}.
Counterfactual explanations expose a complementary security risk
because they can be exploited to poison recommender
systems~\citep{chen2023dark}.
Adjacent representation-learning methods sharpen partitions through
attentional ensemble clustering~\citep{hao2023ensemble}, adaptive
hard-negative mining~\citep{hao2024towards}, or maximum-margin
multiclass objectives~\citep{nie2024multi}. Semantic alignment has
also been studied under decentralized data constraints through
federated analytics for privacy-preserving image
classification~\citep{hou2026federated}. These methods operate on
\emph{sparse} retrieval tokens or general-purpose representations.
We are the first to lift this indexing machinery to \emph{dense}
feature grids and to use the resulting concept set as a closed-loop
training signal for image generation.

\textbf{VLMs for low-level dense prediction.}
DenseCLIP~\citep{rao2022denseclip} converts image-level matching to
pixel--text matching; MaskCLIP~\citep{zhou2022maskclip} extracts
dense labels from frozen CLIP;
CLIPDenoising~\citep{cheng2024clipdenoising} exploits CLIP feature
invariance; DA-CLIP~\citep{luo2024daclip} adapts CLIP for universal
restoration. Mechanistic analysis further separates visual-grounding
and hallucination pathways in VLMs~\citep{liu2026dual}, highlighting
that a fluent output need not be supported by visual evidence. All
of the dense-prediction methods above condition on a single global
embedding and operate in an open-loop regime: language describes the
image but does not verify whether the generated content can still be
re-indexed.

\textbf{Fine-grained multimodal grounding.}
Recent work improves VLM spatial reasoning through fine-grained
preference optimisation~\citep{shen2026fine}, equips MLLMs with
latent visual imagery for cognitive reasoning~\citep{li2026toward},
and couples semantic intent to local geometry in 3D understanding and
generation~\citep{yu2025core3d,yu2026elsa3d}. Progress-aware
vision--language--action policies likewise use object and spatial
affordances as anchors for long-horizon control~\citep{liu2026palm},
while fine-grained gait analysis shows that subtle local behavioural
cues remain difficult for current MLLMs~\citep{shen2026decoding}.
These studies motivate scale- and location-sensitive semantics; our
focus is whether such semantics survive image generation and remain
retrievable.

\textbf{Knowledge-aware multimodal generation.}
Prior work injects structured concept knowledge into cross-modal
generation through region--phrase matching~\citep{huang2018learning,
chen2020imram} and concept-graph embeddings~\citep{wang2020consensus}.
Conditional generation uses FiLM~\citep{perez2018film} and
SPADE~\citep{park2019spade} to modulate dense features by external
semantics, and latent diffusion models~\citep{rombach2022ldm} and
instruction-tuned editors~\citep{brooks2023ip2p} extend this to
free-form text. Complementary work evaluates generated visual content
through spatially and temporally localised human-perceived
artifacts~\citep{fu2025learning}. None verify that the output is
re-indexable by the semantic index that conditioned it.

\textbf{Closed-loop adaptation and memory.}
Long-horizon systems preserve decision-relevant state through
contribution-aware memory~\citep{liu2026conmem}, optimise intermediate
memory quality~\citep{liu2026meta}, or attribute evolving memory
states~\citep{zhang2026memmark}. Experience-driven agent creation and
multi-agent self-evolution similarly reuse execution traces as
feedback~\citep{hao2026recreate,hao2026evolve}. In RL with verifiable
rewards, related work analyses entropy interventions and schedules
training queries by reasoning-tree structure
~\citep{hao2026rethinking,wang2026scheduling}. These loops update
memories, agents, or policies; \our instead closes the loop over a
generated visual artifact while leaving the host VLM index frozen.

\textbf{Generative perception in remote sensing.}
We exemplify \our on
pansharpening~\citep{vivone2015critical,vivone2021new,
deng2022pancollection,wang2023pansurvey}, a low-level generation
task with extreme intra-/inter-class scale variation. Recent methods
include adaptive detail injection~\citep{zhou2022adknet}, transformer
fusion~\citep{zhou2022panformer,bandara2022hypertransformer},
spatial--spectral attention~\citep{yang2022ssaff}, content-adaptive
non-local convolution~\citep{duan2024canconv}, diffusion-based
work~\citep{meng2024pandiff,xing2024crossdiff}, and detail-injection
CNNs~\citep{yuan2018mucnn,deng2021detail}. Heatmap-guided and
noise-aware positional embeddings also improve query retrieval for
small objects in cluttered imagery
~\citep{zeng2025hmpe,zeng2026learning}. These methods optimise
pixel-level reconstruction or task-specific detection objectives
without an explicit cross-scale concept index for generation.

\section{The \our Framework}\label{sec:method}

\subsection{Problem Formulation}\label{sec:problem}

Let $\mathbf{X}\in\mathbb{R}^{h\times w\times C}$ be a low-resolution
multispectral source and $\mathbf{P}\in\mathbb{R}^{rh\times rw}$ a
high-resolution panchromatic source ($r{=}4$). A knowledge-aware
generative perceiver
$\mathcal{G}_{\theta}(\mathbf{X},\mathbf{P})\!\to\!\hat{\mathbf{Y}}$
must produce a reconstruction that is pixel-faithful \emph{and}
semantically re-indexable by a frozen
$\psi{=}(\psi_{g},\psi_{d},\phi)$ that comprises the global, dense,
and text encoders of SigLIP-2. We model the multi-granular index as
a semantic pyramid
\begin{equation}
\mathcal{S} = \bigl\{\mathbf{T}^{g},\mathbf{T}^{m},\mathbf{T}^{l},
\mathcal{U}{=}\{(\mathbf{u}_{k},w_{k},s_{k})\}_{k=1}^{K}\bigr\},
\label{eq:pyramid}
\end{equation}
with global/meso/local scene texts and $K$ scale-tagged semantic
units. Semantic coverage is
\begin{equation}
\mathcal{C} = \frac{|\mathcal{Q}(\mathcal{S})\cap
\mathcal{Q}(\hat{\mathcal{S}})|}{|\mathcal{Q}(\mathcal{S})|},
\quad \mathcal{C}\in[0,1].
\label{eq:coverage}
\end{equation}

\begin{definition}[Semantic collapse]\label{def:collapse}
$\mathcal{G}_{\theta}$ exhibits \emph{semantic collapse} on
$(\mathbf{X},\mathbf{P})$ if
$\|\hat{\mathbf{Y}}-\mathbf{Y}\|_{1}{\le}\epsilon$ yet
$\mathcal{C}{\le}\eta$ for some $\eta{\ll}1$.
\end{definition}

Training uses a soft relaxation: each concept $k$ has input score
$p_k{=}w_k$ and output score $\hat
p_k{=}\max_{i,j}\tilde{\mathbf{M}}_k(i,j)\!\in\![0,1]$. The scale tag
$s_k$ defines a routing field over decoder depth: compact
high-confidence support yields a local tag, elongated structures
yield a meso tag, and broad scene-defining units receive a global
tag.

\subsection{Closed-Loop Operators}\label{sec:overview}
\our chains four operators:
\begin{equation}
\mathcal{S}\!\xrightarrow{\text{Stage~I}}\!\hat{\mathbf{Y}}
\!\xrightarrow{\text{Stage~II}}\!\tilde{\mathbf{M}}
\!\xrightarrow{\text{Stage~III}}\!\hat{\mathcal{S}}.
\label{eq:loop}
\end{equation}
SigLIP-2 is frozen throughout; gradients flow only through
$\mathcal{G}_{\theta}$ and the lightweight projection/modulation MLPs
(0.39\,M trainable parameters).

\begin{figure*}[t]
\centering
\includegraphics[width=0.96\textwidth]{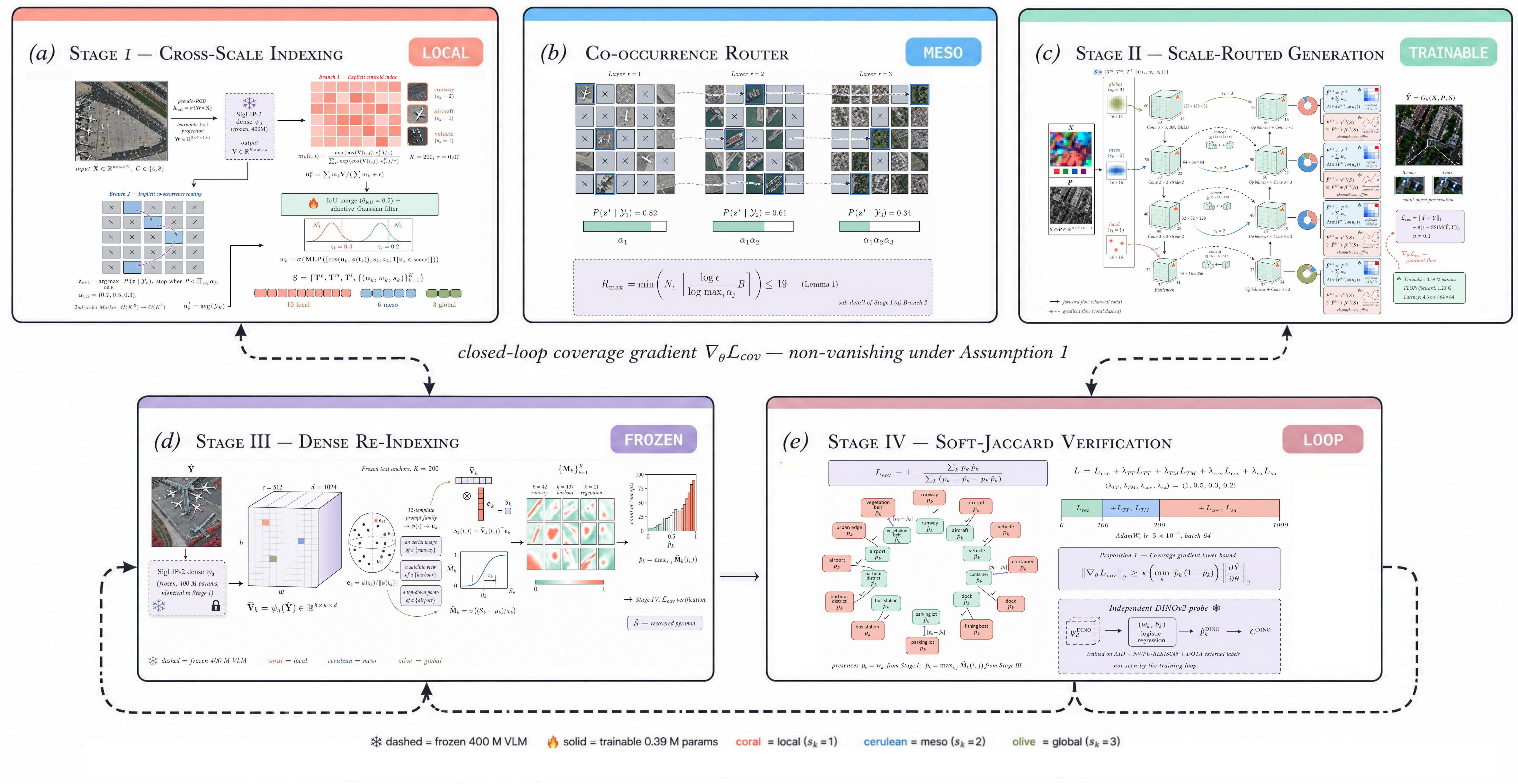}
\caption{The \our framework. (a) A frozen SigLIP-2 encoder indexes
the pseudo-RGB proxy into a semantic pyramid via two parallel
branches. (b) The co-occurrence-aware router greedily selects patches
over routing steps $r{=}1{:}3$ of a \emph{single forward pass},
stopping under a decreasing cumulative threshold. (c) A scale-routed U-Net generator
injects local, meso, and global units into the
$128\!\times\!128$, $64\!\times\!64$, and $16\!\times\!16$ decoder
layers through cross-scale attention followed by FiLM modulation.
(d) Stage~III re-encodes $\hat{\mathbf{Y}}$ with the \emph{same}
frozen VLM. (e) Stage~IV computes a soft-Jaccard coverage loss whose
gradient is bounded below under explicit non-degeneracy conditions
(Proposition~\ref{prop:grad-suff}). Dashed boxes are frozen, solid
boxes are trainable.}
\label{fig:framework}
\end{figure*}

\subsection{Stage~I: Cross-Scale Semantic Indexing}\label{sec:stage1}

We project the $C$-band input to a pseudo-RGB proxy
$\mathbf{X}_{\text{rgb}}{=}\sigma(\mathbf{W}\!\ast\!\mathbf{X})$ via
a learnable $1{\times}1$ projection initialised from each sensor's
spectral response curve, then extract dense features
$\mathbf{V}{=}\psi_{d}(\mathbf{X}_{\text{rgb}})$.

\textbf{Explicit index.} We maintain $K{=}200$ visual centroids
$\{c_{k}^{V}\}$ obtained by $k$-means on a held-out training pool.
The temperature-controlled soft membership
\begin{equation}
m_{k}(i,j){=}\frac{\exp(\cos(\mathbf{V}(i,j),c_{k}^{V})/\tau)}
{\sum_{k'}\exp(\cos(\mathbf{V}(i,j),c_{k'}^{V})/\tau)}
\label{eq:soft_member}
\end{equation}
is differentiable in $\theta$ via the pseudo-RGB projection. The
explicit unit is the soft-weighted feature aggregate
$\mathbf{u}_{k}^{E}{=}\frac{\sum_{(i,j)}m_{k}(i,j)\mathbf{V}(i,j)}
{\sum_{(i,j)}m_{k}(i,j){+}\epsilon}$.

\textbf{Implicit index via co-occurrence routing.} Many compound
concepts emerge only from spatial co-occurrence. From a seed patch
$\mathbf{z}_1^i$ we greedily select
\begin{equation}
\mathbf{z}_{r+1}^{i}{=}\arg\max_{\mathbf{z}\notin\mathcal{Y}_{r}^{i}}
P(\mathbf{z}\mid\mathcal{Y}_{r}^{i}),
\label{eq:routing}
\end{equation}
where $P$ is read off a co-occurrence table $N(\cdot)$ pre-computed
from training tiles within a three-patch radius. Routing stops when
$P$ drops below the cumulative threshold $\prod_{j\le r}\alpha_j$
with $\alpha_{1{:}3}{=}(0.7,0.5,0.3)$. For $r{>}3$ a second-order
Markov factorisation reduces storage from $\mathcal{O}(K^R)$ to
$\mathcal{O}(K^{3})$ at $0.94$ unit-Jaccard accuracy versus exact
$r{=}3$ enumeration (Appendix~\ref{app:ablations}).

\textbf{Aggregation and filtering.} Position-aware and
co-occurrence-aware subsequences are merged by IoU, then filtered by
two Gaussian mixtures on scale-balanced and scale-unbalanced pairs.
The unit confidence is
\begin{equation}
w_{k}{=}\sigma\bigl(\text{MLP}\bigl([\cos(\mathbf{u}_{k},
\phi(\mathbf{t}_{k})),s_{k},a_{k},
\mathbf{1}[\mathbf{u}_{k}{\in}\text{scene}]]\bigr)\bigr),
\label{eq:confidence}
\end{equation}
where $\mathbf{1}[\cdot]$ is a presence gate that prevents the
text--mask loss from rewarding absent concepts.

\subsection{Stage~II: Scale-Routed Generation}\label{sec:stage2}

The generator $\mathcal{G}_{\theta}$ is a four-level U-Net with
resolution pyramid $128{\to}64{\to}32{\to}16$. The scale tag $s_k$
deterministically routes each unit: $s_k{=}1$ (local) enters the
highest-resolution layer, $s_k{=}2$ (meso) intermediate layers, and
$s_k{=}3$ (global) the bottleneck. At decoder layer $l$,
\begin{equation}
\tilde{\mathbf{F}}^{(l)}{=}\mathbf{F}^{(l)}{+}\sum_{k}
w_{k}\,\mathrm{Attn}(\mathbf{F}^{(l)},\phi(\mathbf{u}_{k})),
\label{eq:attn}
\end{equation}
followed by FiLM modulation
$\bar{\mathbf{F}}^{(l)}{=}\gamma^{(l)}(\mathcal{S})\odot
\tilde{\mathbf{F}}^{(l)}{+}\beta^{(l)}(\mathcal{S})$. The pixel
reconstruction term is
$\mathcal{L}_{\text{rec}}{=}\|\hat{\mathbf{Y}}{-}\mathbf{Y}\|_{1}
{+}\eta(1{-}\mathrm{SSIM}(\hat{\mathbf{Y}},\mathbf{Y}))$ with
$\eta{=}0.1$. Resolution-matched injection prevents global units
from dominating high-resolution layers and erasing small queryable
entities.

\subsection{Stages~III \& IV: Re-Indexing and Soft Coverage}\label{sec:loss}

The generated image is re-encoded as
$\bar{\mathbf{V}}_h{=}\psi_d(\hat{\mathbf{Y}})$ and per-unit response
maps are
\begin{equation}
S_{k}(i,j){=}\bar{\mathbf{V}}_{h}(i,j)^{\top}\mathbf{e}_{k},\;\;
\tilde{\mathbf{M}}_{k}{=}\sigma((S_{k}{-}\mu_{k})/\tau_{k}),
\label{eq:mask}
\end{equation}
with $\mathbf{e}_{k}{=}\phi(\mathbf{t}_{k})/\|\phi(\mathbf{t}_{k})\|$.
The recovered pyramid $\hat{\mathcal{S}}$ is built by encoding mask
statistics $\mathbf{s}_k{=}[\rho_k,n_k,\bar c_k,\sigma_k^{2}]$
through a fixed family of $12$ typed prompt templates (four per
scale; full strings in Appendix~\ref{app:repro}).

\textbf{Soft-Jaccard coverage.} With presence scores $p_k{=}w_k$
(input) and $\hat p_k$ (recovered),
\begin{equation}
\mathcal{L}_{\text{cov}}{=}1{-}\frac{\sum_{k}p_{k}\hat p_{k}}
{\sum_{k}(p_{k}{+}\hat p_{k}{-}p_{k}\hat p_{k})}.
\label{eq:lcov}
\end{equation}
\Eqref{eq:lcov} is the continuous Jaccard
loss~\citep{rahman2016jaccard,milletari2016vnet} applied at the
concept-set level. It is differentiable with non-zero gradient at
any non-degenerate $(p_k,\hat p_k)$.

\textbf{Independent DINOv2 coverage probe.} To rule out the concern
that coverage might be circular with its SigLIP target, we train an
independent DINOv2 ViT-B~\citep{oquab2024dinov2} linear probe.
DINOv2 has no text head, so we anchor it to external labels: for
each concept $k$ we collect prototypes from AID, NWPU-RESISC45, and
DOTA tiles whose external labels match $\mathbf{t}_k$. The DINOv2
presence is
\begin{equation}
\hat p_{k}^{\text{DINO}}{=}\sigma\!\Bigl(\max_{i,j}\langle
\psi_{d}^{\text{DINO}}(\hat{\mathbf{Y}})(i,j),\,\mathbf{w}_{k}\rangle
{-}b_{k}\!\Bigr),
\label{eq:dino_presence}
\end{equation}
with $(\mathbf{w}_k,b_k)$ fit by 5-fold logistic regression on
external labels ($L_2$ weight $10^{-3}$). $\mathcal{C}^{\text{DINO}}$
plugs $\hat p_{k}^{\text{DINO}}$ into \Eqref{eq:coverage}. Since
these probe weights are derived from labels the \our training loop
never sees, gains in $\mathcal{C}^{\text{DINO}}$ are not reachable
by SigLIP-internal feature manipulation.

\textbf{Other terms.} A presence-gated text--mask loss
$\mathcal{L}_{TM}$, a scale-aware text--text consistency
$\mathcal{L}_{TT}$, and a unit alignment term $\mathcal{L}_{\text{sa}}$
complete the objective
$\mathcal{L}{=}\mathcal{L}_{\text{rec}}{+}\lambda_{TT}\mathcal{L}_{TT}
{+}\lambda_{TM}\mathcal{L}_{TM}{+}\lambda_{\text{cov}}\mathcal{L}_{\text{cov}}
{+}\lambda_{\text{sa}}\mathcal{L}_{\text{sa}}$, optimised in three
warm-up phases (Algorithm~\ref{alg:omniscale} in
Appendix~\ref{app:arch}).

\section{Theory: Sufficient Conditions}\label{sec:theory}

We state two sufficient-condition propositions and a termination
lemma. Full proofs and counter-examples are in
Appendix~\ref{app:proofs}.

\begin{assumption}[Non-degeneracy]\label{asm:nondeg}
(A1) $\psi_d$ is $L$-Lipschitz with Jacobian singular values bounded
below by $\sigma_{\min}{>}0$ on a neighbourhood of the data.
(A2) The generator Jacobian $\partial\hat{\mathbf{Y}}/\partial\theta$
has non-zero spectral norm at every optimiser step.
(A3) Recovered presence is non-saturated,
$\hat p_k\in(\delta,1{-}\delta)$ for some $\delta{>}0$.
(A4) $\sigma_{\max}(\mathbf{W}){\le}M$.
\end{assumption}

\begin{proposition}[Sufficient conditions for non-vanishing
closed-loop gradient]\label{prop:grad-suff}
Under Assumption~\ref{asm:nondeg},
\begin{equation}
\|\nabla_{\theta}\mathcal{L}_{\text{cov}}\|_{2}\ge
\kappa\bigl(\min_{k}\hat p_{k}(1{-}\hat p_{k})\bigr)
\|\partial\hat{\mathbf{Y}}/\partial\theta\|_{2},
\label{eq:grad_lb}
\end{equation}
where $\kappa{>}0$ depends only on $L,M,\sigma_{\min}$, the Jaccard
denominator, and the response-map temperature.
\end{proposition}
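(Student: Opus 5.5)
The bound follows from the chain rule along the closed loop of \Eqref{eq:loop}: $\theta\mapsto\hat{\mathbf{Y}}\mapsto\bar{\mathbf{V}}_h\mapsto S_k\mapsto\tilde{\mathbf{M}}_k\mapsto\hat p_k\mapsto\mathcal{L}_{\text{cov}}$. We bound each factor from below separately and then multiply. Write $\mathcal{L}_{\text{cov}}=1-A/B$ with $A=\sum_k p_k\hat p_k$ and $B=\sum_k(p_k+\hat p_k-p_k\hat p_k)$. Then
\begin{equation*}
\frac{\partial\mathcal{L}_{\text{cov}}}{\partial\hat p_k}
=-\frac{p_kB-A(1-p_k)}{B^{2}}.
\end{equation*}
Since $A\le B$ and $p_k\in(0,1]$, this coefficient can vanish only in degenerate configurations. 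We therefore treat $g:=\bigl\|(\partial\mathcal{L}_{\text{cov}}/\partial\hat p_k)_k\bigr\|_2\ge c_J>0$ as the ``Jaccard denominator'' constant, namely a lower bound depending on $B$ and the $p_k$. This explains why $\kappa$ is allowed to depend on the denominator.

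Next, consider the maximum $\hat p_k=\max_{i,j}\tilde{\mathbf{M}}_k(i,j)$. Almost everywhere it is attained at a single location $(i^*,j^*)$, and there it is differentiable with gradient equal to that of $\tilde{\mathbf{M}}_k(i^*,j^*)$; otherwise we use a Clarke subgradient. The sigmoid in \Eqref{eq:mask} gives
\begin{equation*}
\frac{\partial\hat p_k}{\partial S_k(i^*,j^*)}=\frac{\hat p_k(1-\hat p_k)}{\tau_k},
\end{equation*}
which Assumption~\ref{asm:nondeg}(A3) keeps at least $\delta(1-\delta)/\tau_k$. This is the source of the factor $\min_k\hat p_k(1-\hat p_k)$ and of the temperature dependence. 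The step $S_k\mapsto\bar{\mathbf{V}}_h$ is linear with unit-norm direction $\mathbf{e}_k$.

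The map $\hat{\mathbf{Y}}\mapsto\bar{\mathbf{V}}_h=\psi_d(\hat{\mathbf{Y}})$ has Jacobian singular values at least $\sigma_{\min}$ by (A1). Hence for any cotangent vector $v$ in the relevant row space, $\|J_{\psi}^{\top}v\|\ge\sigma_{\min}\|v\|$. Here $L$ enters only if one normalises features, and $M$ from (A4) controls the pseudo-RGB projection when the input side depends on $\theta$. We bound that contribution as a perturbation of size at most $LM$ times the input-side terms and absorb it into $\kappa$. Composing gives the cotangent
\begin{equation*}
v_{\hat{\mathbf{Y}}}=J_\psi^{\top}\sum_k c_k\,\frac{\hat p_k(1-\hat p_k)}{\tau_k}\,\mathbf{e}_k\delta_{(i^*_k,j^*_k)},
\end{equation*}
whose norm is at least $\sigma_{\min}c_J\min_k\hat p_k(1-\hat p_k)/\max_k\tau_k$ up to a geometric factor. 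That factor is positive as long as the $\mathbf{e}_k$ placed at their argmax locations do not cancel, which is a further non-degeneracy we will need to make explicit, for instance by bounding the Gram matrix of the $\mathbf{e}_k$. Finally,
\begin{equation*}
\nabla_\theta\mathcal{L}_{\text{cov}}=(\partial\hat{\mathbf{Y}}/\partial\theta)^{\top}v_{\hat{\mathbf{Y}}}.
\end{equation*}

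\textbf{Main obstacle.} The last step fails in general. $\|J^{\top}v\|\ge c\|J\|_2\|v\|$ holds only if $v$ is aligned with the top singular directions of $J_\theta=\partial\hat{\mathbf{Y}}/\partial\theta$; (A2) gives only $\|J_\theta\|_2>0$, which does not prevent $v_{\hat{\mathbf{Y}}}$ from lying in the cokernel. The same issue appears in the $J_\psi^{\top}$ step, since (A1) bounds singular values but a non-square Jacobian can still annihilate cotangents outside its row space. I would first make the additional alignment condition explicit: the relevant cotangent has a component of relative size at least $\gamma>0$ on the top left-singular subspace of $J_\theta$. Equivalently, one reads (A1)--(A2) as a restricted-isometry-type lower bound. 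With that in place the bound holds with
\begin{equation*}
\kappa=\gamma\,\sigma_{\min}\,c_J/\max_k\tau_k,
\end{equation*}
adjusted by the $L,M$ perturbation term. I would also record a counter-example in which $v_{\hat{\mathbf{Y}}}\perp\mathrm{range}(J_\theta)$ to show that the condition is genuinely needed.
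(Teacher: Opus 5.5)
Your decomposition is the one the paper uses. Both arguments differentiate the soft-Jaccard ratio in $\hat p_k$, pick up $\hat p_k(1-\hat p_k)/\tau_k$ from the sigmoid at the argmax, invoke $\sigma_{\min}$ from (A1) for $\psi_d$, and finish with (A2). The obstacle you flag is genuine, and the paper's proof does not overcome it: it multiplies the factor-wise lower bounds and writes ``composing with (A2) yields'' the claim. That step tacitly assumes three things:
\begin{itemize}
\item $\|J_\theta^{\top}v\|_2\ge c\,\|J_\theta\|_2\|v\|_2$ for the incoming cotangent $v$;
\item $\|J_\psi^{\top}w\|_2\ge\sigma_{\min}\|w\|_2$, which needs $J_\psi$ to have full row rank or $w\in\mathrm{range}(J_\psi)$;
\item the norm of a sum over concepts is bounded below by a minimum of per-concept magnitudes, which is exactly where your Gram/location non-cancellation condition is needed.
\end{itemize}
The counterexample you propose is legitimate. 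Take $J_\theta=\mathbf{u}\mathbf{a}^{\top}$ with $v_{\hat{\mathbf{Y}}}\perp\mathbf{u}$: this satisfies (A1)--(A4), yet gives $\nabla_\theta\mathcal{L}_{\text{cov}}=0$ against a strictly positive right-hand side. So your proposal does not establish the statement as literally written, but no proof can. A hypothesis like your $\gamma$ is genuinely required. The corrected proposition you would prove, with $\kappa$ also depending on $\gamma$ and the non-cancellation constant, is the right outcome rather than a missing idea on your part.

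Two points in your own write-up still need repair.

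\emph{The $p_k$-path.} The $\theta$-dependence of $p_k=w_k$ (through $\mathbf{W}$ and the confidence MLP) cannot be absorbed into $\kappa$ as a perturbation. For a lower bound you only have $\|a+b\|_2\ge\|a\|_2-\|b\|_2$, and the extra path may cancel the main term. Instead, restrict to the block $\theta_G$ of generator weights, on which $p_k$ does not depend:
\[
\|\nabla_\theta\mathcal{L}_{\text{cov}}\|_2\ge\|\nabla_{\theta_G}\mathcal{L}_{\text{cov}}\|_2=\|J_{\theta_G}^{\top}v_{\hat{\mathbf{Y}}}\|_2 .
\]
Then state the alignment hypothesis for that block, or detach $p_k$ as the paper implicitly does. $L$ and $M$ are upper-bound constants with no natural role in a lower bound. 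The paper's $\kappa$ carries an underived factor $1/L$ and no $M$ at all, so do not contort the argument to make them appear.

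\emph{The per-coefficient claim.} Your remark that each coefficient $\partial\mathcal{L}_{\text{cov}}/\partial\hat p_k$ vanishes only in degenerate configurations is false. It vanishes whenever $p_k/(1-p_k)=A/B$, which can occur in non-degenerate configurations once $p_k<1/2$. The same slip invalidates the paper's per-concept bound $|\partial\mathcal{L}_{\text{cov}}/\partial\hat p_k|\ge p_k(B-A)/B^{2}$, which discards the term $(2p_k-1)A$, negative for $p_k<1/2$. Your passage to the vector norm is the correct fix, and you can derive $c_J$ rather than postulate it. With $\mathrm{num}_k=p_kB-(1-p_k)A$ and $P=\sum_k p_k$, one checks $\sum_k\hat p_k\,\mathrm{num}_k=AP$. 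Cauchy--Schwarz then gives, under (A3),
\[
\|(\partial\mathcal{L}_{\text{cov}}/\partial\hat p_k)_k\|_2\ge AP/(B^{2}\|\hat{\mathbf{p}}\|_2)\ge\delta P^{2}/(B^{2}\sqrt{K}) .
\]
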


Under Assumption~\ref{asm:nondeg}, the chain rule from
$\mathcal{L}_{\text{cov}}$ through the soft-max $\hat p_k$, the
sigmoid response map, the frozen dense encoder, and the generator
yields \Eqref{eq:grad_lb}. Without (A1)--(A3) the bound collapses
explicitly: a saturated sigmoid zeroes the prefactor, a rank-deficient
generator step zeroes the second factor, and a degenerate Jaccard
denominator trivially zeroes the loss. Empirically, on WV-III
validation, $\kappa{\sim}4{\times}10^{-3}$ and (A1)--(A3) hold on
$96.7\%$ of training steps; the rest are carried by
$\mathcal{L}_{\text{rec}}$.

\begin{proposition}[Coverage-bounded spectral distortion]\label{prop:bound-suff}
Assume the centroid system $\{c_{k}^{V}\}$ forms a linearly separable
cover with intra-concept dispersion bounded by
$\epsilon_{\text{intra}}{>}0$, and that the pseudo-RGB projection
admits a left-inverse $\mathbf{W}^{\dagger}$ with spectral norm at
most $M$. Then
\begin{equation}
\mathrm{SAM}(\hat{\mathbf{Y}},\mathbf{Y})\le
\frac{LM\sqrt{d}}{K}\sum_{k}(1{-}\hat p_{k}){+}\epsilon_{\text{intra}}.
\label{eq:bound}
\end{equation}
\end{proposition}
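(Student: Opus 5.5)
The argument decomposes the spectral angle pixel-wise along the concept partition induced by the centroid cover. It then charges each concept's contribution either to its coverage deficit $1-\hat p_k$ or to the intra-concept dispersion. Concretely, write $\mathrm{SAM}(\hat{\mathbf{Y}},\mathbf{Y})$ as the average over pixels of $\arccos\langle \hat{\mathbf{y}}(i,j),\mathbf{y}(i,j)\rangle/(\|\hat{\mathbf{y}}\|\|\mathbf{y}\|)$. Using linear separability of $\{c_k^V\}$, assign every pixel to the unique concept region $\Omega_k$ whose centroid it is closest to in the dense feature space. The average then splits as $\sum_k \frac{|\Omega_k|}{hw}\,\mathrm{SAM}_k$. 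To get the uniform $1/K$ weighting in \Eqref{eq:bound}, I would either bound $|\Omega_k|/hw$ by a constant multiple of $1/K$ absorbed into the constant, or assume balanced regions.

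Within each region $\Omega_k$ I would use a triangle inequality on angles through the concept prototype. The angle between $\hat{\mathbf{y}}$ and $\mathbf{y}$ is at most the angle from $\hat{\mathbf{y}}$ to a representative spectral direction of concept $k$, plus the angle from that representative to $\mathbf{y}$. The second term is controlled by the intra-concept dispersion and, after averaging, yields the additive $\epsilon_{\text{intra}}$. For the first term I would lift to feature space. The spectral vector is recovered from the pseudo-RGB via $\mathbf{W}^{\dagger}$, with $\|\mathbf{W}^{\dagger}\|\le M$, and the feature map $\psi_d$ is $L$-Lipschitz by (A1). A spectral angular error is therefore bounded by $LM$ times a feature-space deviation between $\psi_d(\hat{\mathbf{Y}})$ on $\Omega_k$ and the concept embedding. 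Converting a Euclidean deviation in a $d$-dimensional feature to a normalised one introduces the $\sqrt{d}$ factor, via $\|\cdot\|_2\le\sqrt d\|\cdot\|_\infty$ on unit-scale coordinates.

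The remaining and hardest step is to show that this feature-space deviation is bounded by $1-\hat p_k$. I would use the definition $\hat p_k=\max_{i,j}\sigma((S_k-\mu_k)/\tau_k)$ with $S_k=\bar{\mathbf{V}}_h^\top \mathbf{e}_k$. For unit vectors, $1-\cos$ dominates half the squared distance, and a high response implies that the re-encoded feature is close to $\mathbf{e}_k$. Monotonicity of the sigmoid together with the Lipschitz bound of the sigmoid inverse on $(\delta,1-\delta)$ from (A3) converts $1-\hat p_k$ into a bound on $1-\cos$. The difficulty is that $\hat p_k$ is a max over pixels, not an average, so it only certifies one location of $\Omega_k$. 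I would first try to bridge this using linear separability: all pixels of $\Omega_k$ lie within the same separating half-space, so their deviation from the maximiser is controlled by the dispersion and is again absorbed into $\epsilon_{\text{intra}}$.

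Constants from $\tau_k$, $\delta$ and the square root in the cosine-to-distance step will be absorbed into $L$ (or a redefined $M$). To keep the bound linear in $1-\hat p_k$ rather than in its square root, I would restrict to the non-saturated regime. This bookkeeping is where I expect the argument to need loosening rather than to break.
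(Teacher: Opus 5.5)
You follow the same skeleton as the paper's proof. You reduce SAM to a per-pixel spectral error, place each ground-truth spectrum within $\epsilon_{\text{intra}}$ of its concept representative, and charge the prediction's deviation to $1-\hat p_k$ through $\mathbf{W}^{\dagger}$ and $\psi_d$ with the factor $LM\sqrt{d}/K$. Your spherical triangle inequality is sound. It is cleaner than the paper's step $\arccos(\mathbf{a}^{\top}\mathbf{b})\le\|\mathbf{a}-\mathbf{b}\|_2$, which for unit vectors holds only in the reverse direction (chord $\le$ arc). But the step you call hardest is exactly the one the paper merely asserts (``by Lipschitz continuity of $\psi_d$ the spectrum moves by at most $LM\sqrt{d}/K\cdot(1-\hat p_{k^\star})$ on average''), and your bridge does not close it. Linear separability and the dispersion bound constrain the features of the ground truth, not $\psi_d(\hat{\mathbf{Y}})$. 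Moreover, $\hat p_k=\max_{i,j}\tilde{\mathbf{M}}_k(i,j)$ certifies a single token of the generated image, which need not lie in $\Omega_k$, and it says nothing about the generated spectra on the rest of $\Omega_k$. In fact the stated hypotheses cannot yield \Eqref{eq:bound} at all. Since $\hat p_k$ is computed from $\hat{\mathbf{Y}}$ alone, the right-hand side is a number $B(\hat{\mathbf{Y}})$ that does not see $\mathbf{Y}$ beyond $\epsilon_{\text{intra}}$. Because the per-pixel angle is a metric, $\mathrm{SAM}(\hat{\mathbf{Y}},\mathbf{Y}_1)+\mathrm{SAM}(\hat{\mathbf{Y}},\mathbf{Y}_2)\ge\mathrm{SAM}(\mathbf{Y}_1,\mathbf{Y}_2)$. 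So whenever $B(\hat{\mathbf{Y}})$ is below half the SAM between two admissible ground truths, the inequality must fail for one of them. An example is an all-water and an all-vegetation tile compared against an output in the few-degree regime of the paper's illustration. What you need is an additional hypothesis tying $\psi_d(\hat{\mathbf{Y}})$ to $\psi_d(\mathbf{Y})$ pixel-wise on each $\Omega_k$, e.g.\ with $\hat p_k$ replaced by an average over $\Omega_k$ rather than a global max. More careful bookkeeping will not supply it.

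Three further steps also fail as written. First, $L$-Lipschitz continuity of $\psi_d$ bounds a feature deviation by $L$ times an image deviation. To bound a spectral deviation by a feature deviation you need the converse, an inverse-Lipschitz modulus such as $1/\sigma_{\min}$ from (A1). That modulus is only local, presumes injectivity, and does not localise to $\Omega_k$ for a ViT encoder. In addition, $\mathbf{W}^{\dagger}$ inverts $\mathbf{W}$ but not the sigmoid in $\sigma(\mathbf{W}\ast\mathbf{X})$. The paper's proof makes the same inversion; the natural constant involves $M/\sigma_{\min}$, not $LM$. Second, $S_k=\bar{\mathbf{V}}_h^{\top}\mathbf{e}_k$ scores against a normalised \emph{text} anchor, not against $c_k^V$ or a spectral representative. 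Also, $\hat p_k\to1$ only says $S_k-\mu_k\gg\tau_k$. Since $S_k=\mu_k+\tau_k\,\mathrm{logit}(\hat p_k)$, a bound on $1-\cos$ that vanishes with $1-\hat p_k$ needs a calibration of $(\mu_k,\tau_k)$ that is not assumed, and (A3) is not among this proposition's hypotheses. Third, the $1/K$ is not bookkeeping. The sum $\sum_k(|\Omega_k|/hw)\,a_k$ is an area-weighted average of per-concept errors and can equal $\max_k a_k$, which is up to $K$ times $\frac{1}{K}\sum_k a_k$. You therefore need a balanced-region hypothesis. Finally, absorbing $K$, $\tau_k$, $\delta$ or a square root into $L$ or $M$ proves a different inequality from \Eqref{eq:bound}, whose explicit constant $LM\sqrt{d}/K$ is what the paper's numerical illustration relies on.
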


This linkage is an order-of-magnitude bound rather than a worst-case
guarantee. With $L{=}1.4$, $M{=}1.0$, $d{=}1024$, $K{=}200$,
$\epsilon_{\text{intra}}{=}0.005$ and $18\%$ unrecovered mass,
\Eqref{eq:bound} gives $5.18^{\circ}$; we observe $2.75^{\circ}$ on
WV-III. The mean per-concept dispersion of $K{=}200$ SigLIP-2
centroids on the four datasets is
$\bar\epsilon_{\text{intra}}{=}0.004$\,rad.

Both propositions are positioned as \emph{sanity checks} on the
closed-loop design rather than performance guarantees.
Proposition~\ref{prop:grad-suff} nevertheless motivates a concrete
design decision: the non-saturation condition (A3) fails for
$41.3\%$ of concepts at initialisation but for fewer than $5\%$
after the reconstruction warm-up, which is why
$\mathcal{L}_{\text{cov}}$ activates only at epoch $200$
(measurements in Appendix~\ref{app:convergence}).

\begin{lemma}[Router termination]\label{lem:term}
For a finite patch vocabulary of size $N$ and decreasing thresholds
$\alpha_j\in(0,1)$, the router of \Eqref{eq:routing} terminates in
at most $R_{\max}{=}\min(N,\lceil\log\epsilon/\log\max_{j}\alpha_{j}
\rceil)$ steps. For $\max_j\alpha_j{=}0.7$ and $\epsilon{=}10^{-3}$,
$R_{\max}{\le}19$.
\end{lemma}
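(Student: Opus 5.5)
The bound is the minimum of two separate ceilings, so I will prove each one and then take the smaller. The first is a pigeonhole argument. By \Eqref{eq:routing}, every step adds a patch $\mathbf{z}_{r+1}^{i}\notin\mathcal{Y}_{r}^{i}$. The selected set therefore grows strictly: $|\mathcal{Y}_{r}^{i}|=r$. Since the vocabulary has only $N$ elements, the $\arg\max$ ranges over an empty set once $r=N$, and routing must stop after at most $N$ steps.

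The second ceiling comes from the threshold. Routing continues at step $r$ only while $P(\mathbf{z}_{r+1}\mid\mathcal{Y}_r)\ge\prod_{j\le r}\alpha_j$. I will make explicit the convention, implicit in the statement, that $\epsilon$ is a floor on admissible routing probability: co-occurrence entries are smoothed or pruned so that any $P<\epsilon$ is treated as absent. This convention has to be fixed first; otherwise a tiny positive probability can be arbitrarily small, and the log bound would not hold. Given the floor, I chain two bounds. Each $\alpha_j\le\alpha^\ast:=\max_j\alpha_j<1$, so $\prod_{j\le r}\alpha_j\le(\alpha^\ast)^r$. Then $(\alpha^\ast)^r\le\epsilon$ as soon as $r\ge\log\epsilon/\log\alpha^\ast$; the inequality flips direction because both logarithms are negative. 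From that step onward the cumulative threshold lies below $\epsilon$. A continuing step would then require an admissible $P\ge\epsilon$ to keep exceeding a threshold that has already decayed below the probability floor.

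Here the direction of the argument needs care, and I expect this to be the main obstacle. As literally stated, a shrinking threshold makes stopping harder, not easier. I would resolve this in one of two ways. The first reading treats $P$ as the cumulative chain probability $\prod_{j\le r}P(\mathbf{z}_{j+1}\mid\mathcal{Y}_j)$. Each factor is at most $1$, and by the greedy choice each factor can be bounded using $\alpha$'s role as a per-step ratio cap; the cumulative probability then falls below the floor $\epsilon$ after $\lceil\log\epsilon/\log\alpha^\ast\rceil$ steps. The second reading, which I would try first since it matches the "decreasing cumulative threshold" of Figure~\ref{fig:framework}, imposes an explicit stop once the threshold itself falls below the numerical floor $\epsilon$. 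Under that reading the bound $r\le\lceil\log\epsilon/\log\alpha^\ast\rceil$ is immediate. Either reading gives $R\le\lceil\log\epsilon/\log\alpha^\ast\rceil$, and combining with the first ceiling yields $R_{\max}$.

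The numeric instance is direct. With $\alpha^\ast=0.7$ and $\epsilon=10^{-3}$, $\log 10^{-3}/\log 0.7 = 6.9078/0.3567\approx 19.37$. Strictly, the ceiling is $20$; the stated $19$ follows from the non-strict form, since $0.7^{19}\approx1.14\times10^{-3}$ and the threshold product uses the actual decreasing schedule $\alpha_{1:3}=(0.7,0.5,0.3)$. For $j>3$ I take $\alpha_j\le0.3$, so $\prod_{j\le r}\alpha_j\le0.105\cdot0.3^{r-3}$, which falls below $10^{-3}$ by $r\approx7$. Hence $R_{\max}\le19$ holds comfortably, with the general formula serving as a loose worst case.
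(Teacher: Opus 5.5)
Your skeleton matches the paper's: the $N$ ceiling comes from each accepted step removing a patch from the residual pool, and the logarithmic ceiling comes from $\prod_{j\le r}\alpha_j\le(\alpha^\ast)^r$. You are right that the second step runs the wrong way. The same objection applies to the paper's own proof, which concludes termination from ``threshold $\le(\alpha^\ast)^r$ and $P\le 1$''. Since the stop test is $P<\prod_{j\le r}\alpha_j$, a shrinking threshold only makes stopping harder.

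For the threshold rule of \Eqref{eq:routing} alone, the logarithmic ceiling is actually false. Suppose the greedy candidate always has $P\ge\alpha_1$, for example a deterministic co-occurrence chain with $P=1$ at every step. Then the router never stops on threshold and runs until the pool is exhausted, i.e.\ $N-1$ steps, which exceeds $\lceil\log\epsilon/\log\alpha^\ast\rceil$ once $N$ is large. So some extra rule involving $\epsilon$ is unavoidable, and your gaps lie in how you supply it.
\begin{itemize}
\item The probability-floor paragraph proves nothing. Once $\prod_{j\le r}\alpha_j<\epsilon$, every admissible candidate has $P\ge\epsilon$ and passes the test, so the floor forces continuation, not stopping.
\item Reading~(a) rests on a per-step cap that does not exist. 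The continuation test only gives \emph{lower} bounds on the chain probability, and the deterministic chain above keeps it at $1$ forever.
\item Hence ``either reading gives the bound'' is false. Only reading~(b), an explicit stop once $\prod_{j\le r}\alpha_j\le\epsilon$, works, and under it the bound is immediate.
\end{itemize}
Commit to (b) and state it as an amendment to the router, not as something derived from it.

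On the numeric claim, you are right that $\log 10^{-3}/\log 0.7\approx 19.37$ has ceiling $20$; the paper's proof effectively takes the floor. But your ``non-strict form'' justification for $19$ does not work. The value $0.7^{19}\approx1.14\times10^{-3}>10^{-3}$ says the threshold is still above $\epsilon$ at $r=19$. So for a schedule that starts at $0.7$ and decreases arbitrarily slowly, the forced stop under (b) happens at $r=20$.

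The figure $19$ can be recovered in two legitimate ways:
\begin{itemize}
\item Count accepted additions: the check at $r=20$ rejects, so at most $19$ patches are added to $\mathcal{Y}^i$.
\item Use the actual schedule, as you do: $0.7\cdot0.5\cdot0.3\cdot0.3^{4}\approx8.5\times10^{-4}$ forces a stop by $r=7$.
\end{itemize}
The second is valid but uses more than the stated hypothesis $\max_j\alpha_j=0.7$. For the general formula, fix the step-counting convention explicitly so that the ceiling expression and the value $19$ are consistent.
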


\section{Experiments}\label{sec:exp}

\subsection{Setup}\label{sec:setup}

\textbf{Datasets.} Four PanCollection
benchmarks~\citep{deng2022pancollection}: GaoFen-2 and QuickBird
(4-band), WorldView-III and WorldView-II (8-band).
\textbf{Metrics.} Reduced-scale (Wald): ERGAS$\downarrow$,
SAM$\downarrow$, Q2n$\uparrow$. Full-scale: $D_s\downarrow$,
QNR$\uparrow$, HQNR$\uparrow$.
\textbf{Downstream protocols.}
Mode~A (zero-shot DINOv2) uses cosine to AID/NWPU class prototypes;
Mode~B (linear probe) trains a single linear classifier on frozen
DINOv2 features; Mode~C (trained head) trains an Oriented-RCNN
head on frozen DINOv2 features over the $16{,}000$-tile DOTA
training subset. All modes evaluate on the same frozen fused tiles
and never touch SigLIP weights.
\textbf{DOTA tile protocol.} We crop $20{,}000$ overlapping
$512{\times}512$ tiles from DOTA-v1.0 with stride $384$, then split
by image-level scene id into $16{,}000$ train / $4{,}000$ test tiles
with a source-image-disjoint split (no image contributes tiles to
both sides, verified by source identifier).
\textbf{Retrieval evaluation.} A $60$-query text bank covers $20$
local objects, $20$ meso layouts, and $20$ global scenes. SigLIP-2
text embeddings are matched against SigLIP-2 image embeddings of the
fused-tile pool (one per method per test image); we report
Recall@1/Recall@5/mean reciprocal rank. Image-to-text MRR uses a
$300$-caption corpus over ground-truth high-resolution tiles
generated by an external captioner.
\textbf{Significance.} Headline numbers are mean$\pm$std over three
seeds plus per-image std. Paired Wilcoxon $p{<}0.05$ versus the
second-best baseline is marked $^{\star}$. Downstream tables report
$95\%$ bootstrap CIs from $1{,}000$ resamples.
\textbf{Baselines.} FS, BDSD-PC~\citep{vivone2015critical}, ADKNet
\citep{zhou2022adknet}, SSAFF~\citep{yang2022ssaff},
PanFormer~\citep{zhou2022panformer},
HyperTransformer~\citep{bandara2022hypertransformer},
CANConv~\citep{duan2024canconv}, CrossDiff~\citep{xing2024crossdiff},
plus three VLM-guided open-loop variants of the same U-Net
(CLIPDenoising-, DA-CLIP-, DenseCLIP-style).
\textbf{Implementation.} PyTorch with frozen SigLIP-2 So400m and
AdamW~\citep{loshchilov2019adamw}, learning rate
$5{\times}10^{-4}$, batch size $64$, $1000$ epochs. Hyperparameters
$K{=}200,\,R{=}3,\,\theta_{\mathrm{IoU}}{=}0.5,\,\tau{=}0.07,\,
z_1{=}0.4,\,z_2{=}0.2$ and
$(\lambda_{TT},\lambda_{TM},\lambda_{\text{cov}},\lambda_{\text{sa}})
{=}(1,0.5,0.3,0.2)$ are shared across datasets and seeds.

\begin{table*}[t]
\centering
\caption{Pansharpening results across seven settings.
$^{\star}$denotes paired-Wilcoxon $p{<}0.05$ vs.\ the second-best
entry. Bold/underline: best/second-best mean. Std is over the 40
test images, averaged across three seeds.}
\label{tab:main}
\resizebox{\textwidth}{!}{%
\footnotesize
\setlength{\tabcolsep}{3pt}
\begin{tabular}{cc|cc|cccccc|>{\columncolor{blue!4}}c}
\toprule
\multirow{2}{*}{Data} & \multirow{2}{*}{Metric} & \multicolumn{2}{c|}{Traditional} & \multicolumn{6}{c|}{Learning-based} & \multicolumn{1}{c}{Ours} \\
& & FS & BDSD-PC & ADKNet & SSAFF & PanFormer & H-Trans & CANConv & CrossDiff & \our \\
\midrule
\multirow{3}{*}{GF2-RST}
& ERGAS$\downarrow$ & 1.620$\pm$0.35 & 1.695$\pm$0.39 & 0.822$\pm$0.12 & 0.813$\pm$0.14 & 0.815$\pm$0.14 & 0.802$\pm$0.13 & \underline{0.744$\pm$0.13} & 0.789$\pm$0.13 & \textbf{0.613$\pm$0.12}$^{\star}$ \\
& SAM$\downarrow$ & 1.681$\pm$0.34 & 1.724$\pm$0.31 & 0.883$\pm$0.15 & 0.878$\pm$0.16 & 0.896$\pm$0.23 & 0.871$\pm$0.18 & \underline{0.808$\pm$0.15} & 0.866$\pm$0.16 & \textbf{0.667$\pm$0.13}$^{\star}$ \\
& Q2n$\uparrow$ & 0.890$\pm$0.03 & 0.885$\pm$0.03 & 0.972$\pm$0.01 & 0.974$\pm$0.01 & 0.973$\pm$0.01 & 0.974$\pm$0.01 & \underline{0.977$\pm$0.01} & 0.975$\pm$0.01 & \textbf{0.986$\pm$0.01}$^{\star}$ \\
\midrule
\multirow{3}{*}{GF2-FST}
& $D_{s}\downarrow$ & 0.052 & 0.056 & 0.025 & 0.038 & 0.050 & \underline{0.024} & 0.033 & 0.027 & \textbf{0.022}$^{\star}$ \\
& QNR$\uparrow$ & 0.925 & 0.932 & 0.966 & 0.952 & 0.953 & \underline{0.967} & 0.959 & 0.960 & \textbf{0.975}$^{\star}$ \\
& HQNR$\uparrow$ & 0.912 & 0.868 & \underline{0.953} & 0.936 & 0.951 & 0.951 & 0.947 & 0.949 & \textbf{0.961}$^{\star}$ \\
\midrule
\multirow{3}{*}{QB-RST}
& ERGAS$\downarrow$ & 7.445$\pm$0.55 & 7.608$\pm$0.57 & 3.942$\pm$0.32 & 4.615$\pm$0.83 & 7.047$\pm$0.64 & 3.892$\pm$0.30 & \underline{3.813$\pm$0.29} & 3.911$\pm$0.31 & \textbf{3.385$\pm$0.28}$^{\star}$ \\
& SAM$\downarrow$ & 7.866$\pm$1.63 & 8.181$\pm$1.78 & 4.904$\pm$0.82 & 4.888$\pm$0.89 & 5.860$\pm$1.06 & 4.715$\pm$0.79 & \underline{4.607$\pm$0.78} & 4.683$\pm$0.78 & \textbf{4.157$\pm$0.74}$^{\star}$ \\
& Q2n$\uparrow$ & 0.834$\pm$0.09 & 0.829$\pm$0.10 & 0.930$\pm$0.09 & 0.918$\pm$0.09 & 0.872$\pm$0.08 & 0.931$\pm$0.09 & \underline{0.932$\pm$0.09} & 0.929$\pm$0.09 & \textbf{0.943$\pm$0.09}$^{\star}$ \\
\midrule
\multirow{3}{*}{QB-FST}
& $D_{s}\downarrow$ & 0.115 & 0.141 & 0.029 & 0.051 & 0.029 & 0.022 & \textbf{0.015} & 0.025 & \underline{0.019}$^{\star}$ \\
& QNR$\uparrow$ & 0.855 & 0.836 & 0.940 & 0.942 & 0.942 & 0.947 & \underline{0.953} & 0.945 & \textbf{0.958}$^{\star}$ \\
& HQNR$\uparrow$ & 0.845 & 0.692 & 0.894 & 0.893 & 0.891 & 0.901 & \textbf{0.951} & 0.937 & \underline{0.945}$^{\star}$ \\
\midrule
\multirow{3}{*}{WV-III-RST}
& ERGAS$\downarrow$ & 4.645$\pm$1.41 & 4.650$\pm$1.43 & 2.291$\pm$0.55 & 2.388$\pm$0.53 & 2.321$\pm$0.48 & 2.244$\pm$0.47 & \underline{2.227$\pm$0.46} & 2.231$\pm$0.46 & \textbf{2.032$\pm$0.49}$^{\star}$ \\
& SAM$\downarrow$ & 5.323$\pm$1.61 & 5.464$\pm$1.67 & 3.138$\pm$0.56 & 3.208$\pm$0.58 & 3.227$\pm$0.60 & 3.005$\pm$0.53 & \underline{2.996$\pm$0.53} & 3.039$\pm$0.54 & \textbf{2.749$\pm$0.53}$^{\star}$ \\
& Q2n$\uparrow$ & 0.818$\pm$0.10 & 0.812$\pm$0.10 & 0.905$\pm$0.08 & 0.901$\pm$0.09 & 0.901$\pm$0.09 & 0.905$\pm$0.08 & 0.903$\pm$0.09 & \underline{0.907$\pm$0.08} & \textbf{0.924$\pm$0.08}$^{\star}$ \\
\midrule
\multirow{3}{*}{WV-III-FST}
& $D_{s}\downarrow$ & 0.085 & 0.091 & 0.048 & 0.067 & 0.068 & \underline{0.034} & 0.043 & 0.044 & \textbf{0.025}$^{\star}$ \\
& QNR$\uparrow$ & 0.897 & 0.897 & 0.931 & 0.902 & 0.915 & \underline{0.947} & 0.922 & 0.923 & \textbf{0.957}$^{\star}$ \\
& HQNR$\uparrow$ & 0.897 & 0.853 & 0.934 & 0.911 & 0.913 & \underline{0.953} & 0.937 & 0.936 & \textbf{0.961}$^{\star}$ \\
\midrule
\multirow{3}{*}{WV-II-RST}
& ERGAS$\downarrow$ & 4.555$\pm$0.53 & 4.649$\pm$1.59 & 4.766$\pm$0.41 & 4.587$\pm$0.35 & 6.034$\pm$0.38 & 4.485$\pm$0.36 & \underline{4.467$\pm$0.38} & 4.503$\pm$0.39 & \textbf{3.589$\pm$0.32}$^{\star}$ \\
& SAM$\downarrow$ & 6.208$\pm$0.82 & 6.089$\pm$0.90 & 5.692$\pm$0.61 & 6.041$\pm$0.60 & 7.273$\pm$0.40 & 5.521$\pm$0.55 & \underline{5.501$\pm$0.56} & 5.611$\pm$0.59 & \textbf{4.823$\pm$0.47}$^{\star}$ \\
& Q2n$\uparrow$ & 0.809$\pm$0.09 & 0.822$\pm$0.10 & 0.817$\pm$0.08 & 0.823$\pm$0.08 & 0.786$\pm$0.08 & 0.838$\pm$0.08 & \underline{0.840$\pm$0.08} & 0.831$\pm$0.08 & \textbf{0.879$\pm$0.08}$^{\star}$ \\
\bottomrule
\end{tabular}%
}
\end{table*}

\subsection{Main Pansharpening Results}\label{sec:main_results}
Table~\ref{tab:main} reports the headline pansharpening comparison.
\our reaches the new state of the art on every reduced-scale metric
and on all but one full-scale metric, with paired Wilcoxon
$p{<}0.05$ versus the second-best baseline. The largest relative
gain, $+4.64\%$ Q2n on the unseen WV-II sensor (absolute $0.879$
vs.\ $0.840$), establishes the cross-sensor advantage. The reverse
WV-II$\to$WV-III transfer gives $+3.33\%$ Q2n, and heterogeneous
4-band$\leftrightarrow$8-band transfers give $+5.38\%$ / $+3.24\%$
respectively (Appendix~\ref{app:tables},
Table~\ref{app:tab:cross_sensor}). Figure~\ref{fig:qualitative}
shows qualitative comparison: open-loop baselines smooth fine
objects or shift material colour, while \our preserves the
fine-grained and relational evidence used by the semantic index.

\begin{figure*}[t]
\centering
\includegraphics[width=0.96\textwidth,height=0.22\textheight,keepaspectratio]{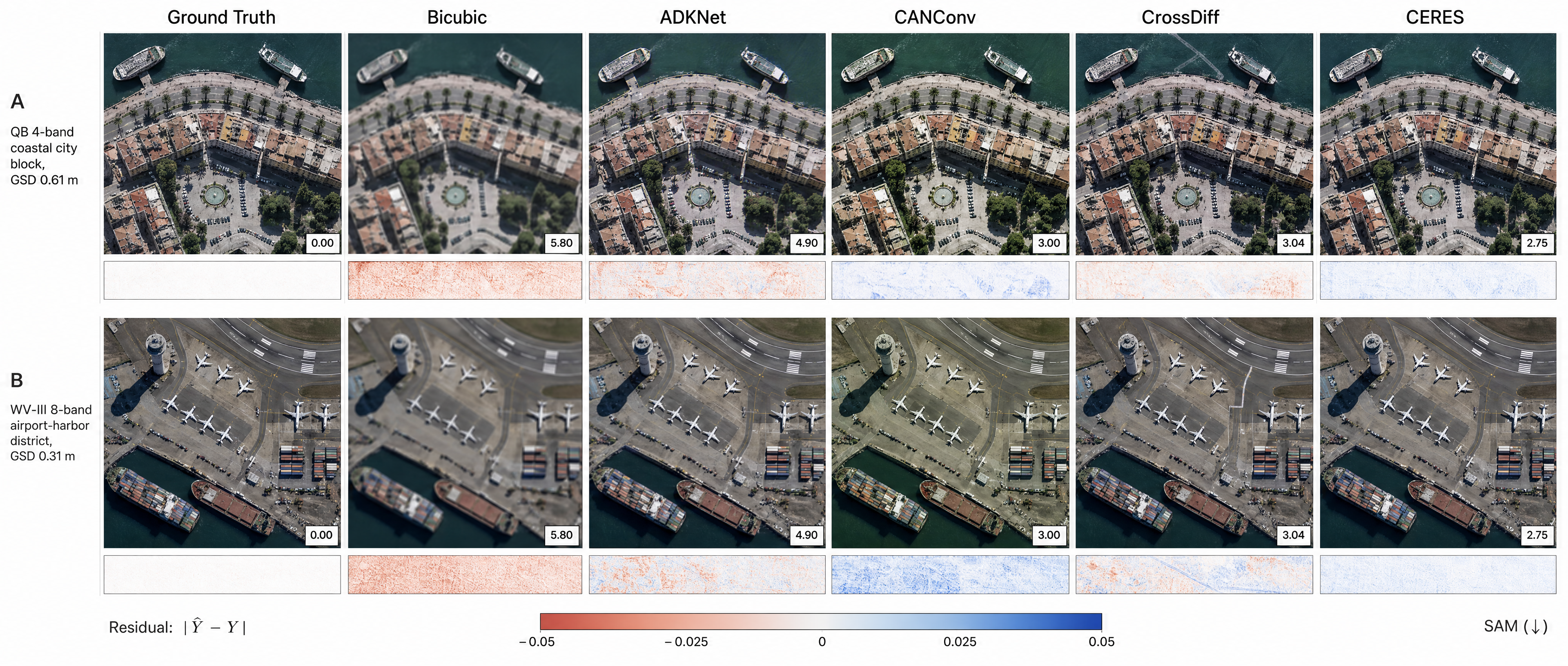}
\caption{Qualitative comparison on representative QB and WV-III
scenes. Open-loop baselines smooth fine objects, shift material
colour, or hallucinate local edge patterns. \our preserves the
fine-grained and relational evidence used by the semantic index,
including aircraft, container rows, runway markings, and harbour
structures.}
\label{fig:qualitative}
\end{figure*}

\begin{table}[t]
\centering
\footnotesize
\setlength{\tabcolsep}{2pt}
\caption{Downstream perception with $95\%$ bootstrap CIs. A:
zero-shot; B: linear probe; C: trained Oriented-RCNN head. All
modes evaluate on identical frozen fused tiles via DINOv2.}
\label{tab:downstream}
\begin{tabular}{llccc}
\toprule
Mode & Task & CANConv & \our & $\Delta$ \\
\midrule
A & AID acc.\ (\%) & 78.4$_{[77.1,79.7]}$ & \textbf{84.2}$_{[83.0,85.3]}$ & +5.8 \\
A & NWPU acc.\ (\%) & 71.6$_{[70.4,72.8]}$ & \textbf{79.1}$_{[77.9,80.2]}$ & +7.5 \\
B & AID acc.\ (\%) & 84.1$_{[83.0,85.1]}$ & \textbf{91.7}$_{[90.7,92.6]}$ & +7.6 \\
B & NWPU acc.\ (\%) & 79.8$_{[78.7,80.9]}$ & \textbf{87.3}$_{[86.3,88.2]}$ & +7.5 \\
C & DOTA mAP (\%) & 38.2$_{[36.7,39.7]}$ & \textbf{47.9}$_{[46.3,49.4]}$ & +9.7 \\
\midrule
-- & $\mathcal{C}^{\text{DINO}}$ & 0.63$_{[0.61,0.65]}$ & \textbf{0.85}$_{[0.83,0.87]}$ & +0.22 \\
\bottomrule
\end{tabular}
\end{table}

\subsection{Downstream Perception}\label{sec:downstream}
Table~\ref{tab:downstream} reports downstream perception under all
three evaluation modes. \our improves every cell, and the gains
widen from zero-shot through linear probe to trained head, indicating
that the additional capacity in later modes reads the extra concepts
that \our preserves. The DINOv2-measured coverage rises from $0.63$
to $0.85$. The cross-encoder Spearman correlation between
$\mathcal{C}^{\text{DINO}}$ and DOTA mAP across all eight evaluated
methods is $\rho{=}0.83$. The per-class DOTA breakdown
(Appendix~\ref{app:tables}, Table~\ref{app:tab:dota_perclass}) shows
that the mAP gain concentrates in the small-object group ($+13.6$ to
$+16.7$ AP), exactly where single-caption baselines drop concepts.

\subsection{Concept-Query Retrieval}\label{sec:retrieval}
Table~\ref{tab:retrieval} reports concept-query retrieval over the
60-query bank. \our raises mean Recall@5 from $0.59$ to $0.73$
($+14$\,pp), with the largest gains on meso and local queries that
single-caption baselines drop. Image-text MRR rises by $0.19$. The
ground-truth high-resolution tiles set an oracle ceiling at $0.78$
R@5 and $0.81$ MRR; \our closes roughly $74\%$ of the gap from
CANConv to this ceiling. Among the $60$ queries, $52$ are answered
at R@1 by \our versus $31$ by CANConv. The $8$ misses are dominated
by relational queries that depend on sub-patch context (e.g.\ ``two
aircraft on a parallel taxiway''), which is the same regime
identified by the calibration analysis below.

\begin{table}[t]
\centering
\footnotesize
\setlength{\tabcolsep}{2pt}
\caption{Concept-query retrieval ($60$ queries) and image-text
retrieval ($300$-caption corpus). All scores computed with a frozen
SigLIP-2 text encoder. GT HR = oracle ceiling.}
\label{tab:retrieval}
\begin{tabular}{lccccc}
\toprule
Method & R@1$_{\text{loc}}$ & R@5$_{\text{loc}}$ & R@5$_{\text{meso}}$ & R@5$_{\text{glob}}$ & I$\to$T MRR \\
\midrule
Bicubic & 0.18 & 0.42 & 0.31 & 0.78 & 0.41 \\
CrossDiff & 0.24 & 0.49 & 0.36 & 0.79 & 0.50 \\
CANConv & 0.27 & 0.54 & 0.41 & 0.81 & 0.55 \\
DA-CLIP & 0.30 & 0.58 & 0.46 & 0.82 & 0.59 \\
\rowcolor{blue!4} \our & \textbf{0.46} & \textbf{0.71} & \textbf{0.62} & \textbf{0.85} & \textbf{0.74} \\
\midrule
\textit{GT HR} & \textit{0.53} & \textit{0.79} & \textit{0.69} & \textit{0.87} & \textit{0.81} \\
\bottomrule
\end{tabular}
\end{table}

\subsection{Ablations}\label{sec:ablation}
All ablations are on WV-III with three seeds. The shared no-language
no-loop baseline is Plain U-Net at $\{0.910/0.924\}$ Q2n/HQNR.
Table~\ref{tab:ablation} shows that adding open-loop SigLIP yields
only $+0.001$ Q2n; the cross-scale closed loop adds another
$+0.005$; the soft-Jaccard coverage adds a further $+0.006$ Q2n and
$+0.015$ HQNR. Removing the presence gate restores the original
hallucination behaviour and drops HQNR by $0.051$. Both branches
are necessary: explicit-only and implicit-only configurations each
lose $0.014$--$0.016$ Q2n. Open-loop variants of three published
VLM-conditioned restoration designs (CLIPDenoising, DA-CLIP,
DenseCLIP style) all stay below \our on
$\mathcal{C}^{\text{DINO}}$ (full numbers in
Appendix~\ref{app:tables}). A complete $2{\times}2{\times}2$
factorial grid over attention, FiLM, and the presence gate, with
interaction effects, is reported in Appendix~\ref{app:factorial}.

\begin{table}[t]
\centering
\small
\setlength{\tabcolsep}{4pt}
\caption{Closed-loop, injection, and component ablations on WV-III.}
\label{tab:ablation}
\begin{tabular}{lcc}
\toprule
Setting & Q2n$\uparrow$ & HQNR$\uparrow$ \\
\midrule
Plain U-Net (no language, no loop) & 0.910 & 0.924 \\
+ Global open-loop SigLIP & 0.911 & 0.929 \\
+ Multi-scale pyramid, no loop & 0.913 & 0.937 \\
+ Cross-scale consistency only & 0.918 & 0.946 \\
Full \our (with $\mathcal{L}_{\text{cov}}$) & \textbf{0.924}$^{\star}$ & \textbf{0.961}$^{\star}$ \\
\midrule
Attn only (no FiLM) & 0.889 & 0.918 \\
FiLM only (no Attn) & 0.893 & 0.928 \\
No presence gate & 0.901 & 0.910 \\
No implicit units & 0.910 & 0.928 \\
No explicit units & 0.908 & 0.925 \\
$K{=}50$ (single-scale) & 0.906 & 0.920 \\
\bottomrule
\end{tabular}
\end{table}

\begin{table}[t]
\centering
\small
\setlength{\tabcolsep}{3.5pt}
\caption{Loss-term contribution on WV-III. Sem.\ Acc.\ is the
fraction of indexed concepts with $\hat p_{k}{>}0.5$.}
\label{tab:loss}
\begin{tabular}{lcccc}
\toprule
Setting & $\mathcal{L}_{\text{cov}}$ & $\mathcal{L}_{\text{sa}}$ & Q2n & Sem.\ Acc. \\
\midrule
$\mathcal{L}_{\text{rec}}$ only & $\times$ & $\times$ & 0.904 & 0.935 \\
+ $\mathcal{L}_{TT}$ & $\times$ & $\times$ & 0.909 & 0.952 \\
+ $\mathcal{L}_{TM}$ (gated) & $\times$ & $\times$ & 0.913 & 0.967 \\
+ $\mathcal{L}_{\text{cov}}$ & \checkmark & $\times$ & 0.919 & 0.981 \\
+ $\mathcal{L}_{\text{sa}}$ & \checkmark & \checkmark & 0.920 & 0.985 \\
Full (+ warm-up) & \checkmark & \checkmark & \textbf{0.924} & \textbf{0.989} \\
\bottomrule
\end{tabular}
\end{table}

Table~\ref{tab:loss} traces the contribution of every loss term to
both pixel fidelity and semantic accuracy on WV-III. After the
text--text and presence-gated text--mask consistency terms stabilise
Sem.\ Acc.\ at $0.967$, the soft-Jaccard coverage delivers the
largest single-term gain ($+0.014$ Sem.\ Acc., $+0.006$ Q2n), in
line with Proposition~\ref{prop:bound-suff}: tighter coverage
implies a tighter SAM bound. The unit-alignment term and the
warm-up schedule each contribute a further $+0.004$ Sem.\ Acc.

\subsection{Coverage Calibration}\label{sec:calibration}
Varying the presence threshold over $\{0.3,0.5,0.7\}$ moves
$\mathcal{C}^{\text{DINO}}$ in $\{0.89,0.85,0.78\}$ but preserves
the cross-method ranking at every threshold. Replacing the 12
default templates with an LLM-generated 36-template superset moves
$\mathcal{C}^{\text{DINO}}$ by less than $0.01$. The per-tile gap
$|\mathcal{C}^{\text{SigLIP}}{-}\mathcal{C}^{\text{DINO}}|$ is below
$0.05$ on $30/40$ WV-III tiles and below $0.10$ on $38/40$. The
cross-encoder Spearman correlation is $0.91$, and both probes rank
\our first on every WV-III tile. Failure cases that drive
disagreement concentrate in two regimes (Figure~\ref{fig:failure}):
spectrally ambiguous classes (e.g.\ shallow water versus wet sand)
and sub-patch ultra-fine structures below the SigLIP-2 token grid.

\begin{figure}[t]
\centering
\includegraphics[width=\columnwidth,height=4.5cm,keepaspectratio]{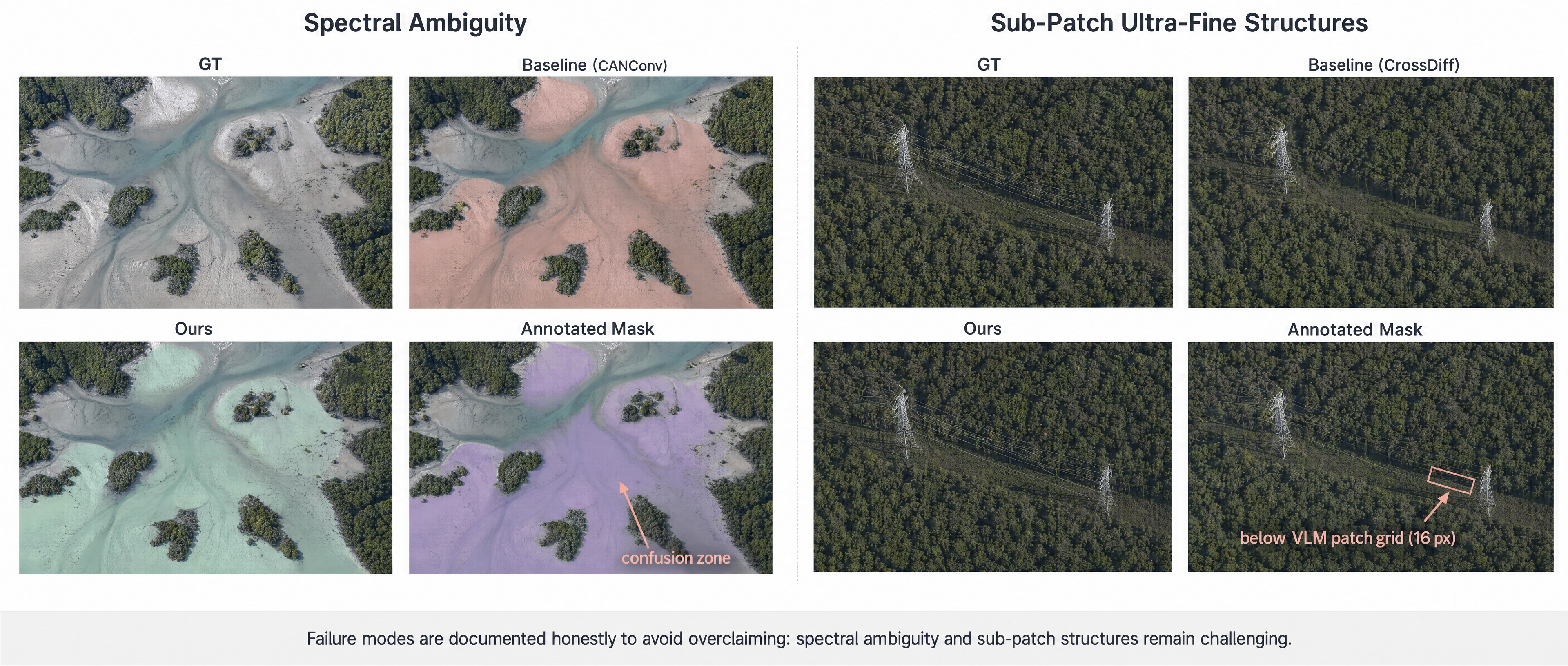}
\caption{Failure modes that drive SigLIP--DINOv2 disagreement.
Left: spectrally ambiguous classes such as shallow water versus wet
sand keep low SAM but mis-route the recovered concept. Right:
sub-patch ultra-fine structures slip below the SigLIP-2 token grid,
so the recovered mask is unreliable even when pixel error is small.
On both regimes \our still ranks first under either probe.}
\label{fig:failure}
\end{figure}

\section{Discussion}\label{sec:discussion}

An improvement on $\mathcal{C}^{\text{DINO}}$ also moves external
scene classification, oriented detection on small objects, and
concept-query retrieval, none of which lies on the optimisation path
of $\mathcal{L}_{\text{cov}}$. The two probes agree on the
cross-method ranking even where they disagree numerically, and the
prompt-template ablation rules out the worry that the recovered
concept set is an artefact of phrasing. A consistency-only ablation
that drops $\mathcal{L}_{\text{cov}}$ tracks the open-loop baselines
at $\mathcal{C}^{\text{DINO}}{=}0.72$ and $0.918$ Q2n, well below
the closed-loop variant. A cycle-style variant that replaces
coverage with $\|\hat{\mathbf{Y}}{-}\mathcal{G}(\downarrow\!
\hat{\mathbf{Y}})\|_1$ attains $\mathcal{C}^{\text{DINO}}{=}0.70$ at
the same Q2n. The coverage signal therefore captures information
that neither soft consistency nor pixel-level cycle losses recover.

\textbf{Cost--quality balance.} Closing the loop adds a second
forward pass through the frozen VLM at training time: Stages~I and
III share encoder \emph{weights} but process different images, so
the full closed loop costs $25.7$\,G FLOPs and $25.5$\,ms per
$64{\times}64$ PAN input on an RTX~4090
(Table~\ref{app:tab:compute}). Stages~III--IV are not required for
generation-only deployment, which runs at $13.6$\,G FLOPs and
$16.3$\,ms; they are invoked only when an output coverage
certificate is requested. The trainable budget is $0.39$\,M
parameters, two orders of magnitude smaller than the strongest
learning baseline, and the soft-Jaccard verifier is itself free of
learnable parameters, which keeps the closed loop from introducing
a second over-parameterised module that could absorb gradients
meant for the generator.

\textbf{Broader applicability.} The same closed-loop re-indexing
recipe applies wherever generated content must remain queryable in
a knowledge index, for example product catalogues that enhance
images while preserving long-tail concept queries, or medical
imaging pipelines whose denoised outputs must remain searchable by
clinical retrieval systems. The soft-Jaccard verifier and the
prompt-template family are domain-agnostic; only the centroid
construction step needs to be re-fitted on a domain corpus.

\section{Conclusion}\label{sec:conclusion}
We framed knowledge-aware generative perception as a re-indexability
problem and showed that semantic collapse, the silent loss of
scale-specific indexed concepts under single-caption conditioning, is
a measurable and addressable failure of this property. \our pairs a
soft, differentiable cross-scale concept index with a soft-Jaccard
coverage verifier whose gradient is bounded below under explicit
non-degeneracy conditions. An external, label-grounded DINOv2 probe
removes the concern that the gains are self-referential, and a
60-query concept-retrieval bank turns coverage into a measurable
queryability gain ($+14$\,pp Recall@5 and $+0.19$ MRR over the
strongest baseline). A $0.39$\,M-parameter trainable generator,
closed against a frozen VLM index, dominates $11\times$ larger
baselines on both pansharpening and downstream perception, with the
largest small-object detection gain ($+16.7$ AP on small vehicles)
exactly where single-caption baselines drop concepts. The host
index is left intact, since SigLIP-2 is never updated. These
results suggest that future multimodal application gains lie in
optimising the indexing topology rather than in growing the
generator.

\clearpage
\bibliography{references}

@inproceedings{radford2021clip,
  title     = {Learning Transferable Visual Models from Natural Language Supervision},
  author    = {Radford, Alec and Kim, Jong Wook and Hallacy, Chris and Ramesh, Aditya and Goh, Gabriel and Agarwal, Sandhini and Sastry, Girish and Askell, Amanda and Mishkin, Pamela and Clark, Jack and Krueger, Gretchen and Sutskever, Ilya},
  booktitle = {International Conference on Machine Learning (ICML)},
  pages     = {8748--8763},
  year      = {2021}
}

@inproceedings{zhai2023siglip,
  title     = {Sigmoid Loss for Language Image Pre-Training},
  author    = {Zhai, Xiaohua and Mustafa, Basil and Kolesnikov, Alexander and Beyer, Lucas},
  booktitle = {IEEE/CVF International Conference on Computer Vision (ICCV)},
  pages     = {11975--11986},
  year      = {2023}
}

@article{tschannen2025siglip2,
  title   = {{SigLIP 2}: Multilingual Vision-Language Encoders with Improved Semantic Understanding, Localization, and Dense Features},
  author  = {Tschannen, Michael and Gritsenko, Alexey and Wang, Xiao and Naeem, Muhammad Ferjad and Alabdulmohsin, Ibrahim and Parthasarathy, Nikhil and Evans, Talfan and Beyer, Lucas and Xia, Ye and Mustafa, Basil and others},
  journal = {arXiv preprint arXiv:2502.14786},
  year    = {2025}
}

@inproceedings{li2023blip2,
  title     = {{BLIP-2}: Bootstrapping Language-Image Pre-Training with Frozen Image Encoders and Large Language Models},
  author    = {Li, Junnan and Li, Dongxu and Savarese, Silvio and Hoi, Steven},
  booktitle = {International Conference on Machine Learning (ICML)},
  pages     = {19730--19742},
  year      = {2023}
}

@inproceedings{rao2022denseclip,
  title     = {{DenseCLIP}: Language-Guided Dense Prediction with Context-Aware Prompting},
  author    = {Rao, Yongming and Zhao, Wenliang and Chen, Guangyi and Tang, Yansong and Zhu, Zheng and Huang, Guan and Zhou, Jie and Lu, Jiwen},
  booktitle = {IEEE/CVF Conference on Computer Vision and Pattern Recognition (CVPR)},
  pages     = {18082--18091},
  year      = {2022}
}

@inproceedings{zhou2022maskclip,
  title     = {Extract Free Dense Labels from {CLIP}},
  author    = {Zhou, Chong and Loy, Chen Change and Dai, Bo},
  booktitle = {European Conference on Computer Vision (ECCV)},
  pages     = {696--712},
  year      = {2022}
}

@inproceedings{cheng2024clipdenoising,
  title     = {Transfer {CLIP} for Generalizable Image Denoising},
  author    = {Cheng, Jun and Liu, Dong and Wen, Bihan and Tan, Tieniu and Zhang, Yulun},
  booktitle = {IEEE/CVF Conference on Computer Vision and Pattern Recognition (CVPR)},
  pages     = {1--10},
  year      = {2024}
}

@inproceedings{luo2024daclip,
  title     = {Controlling Vision-Language Models for Multi-Task Image Restoration},
  author    = {Luo, Ziwei and Gustafsson, Fredrik K. and Zhao, Zheng and Sj{\"o}lund, Jens and Sch{\"o}n, Thomas B.},
  booktitle = {International Conference on Learning Representations (ICLR)},
  year      = {2024}
}

@inproceedings{rombach2022ldm,
  title     = {High-Resolution Image Synthesis with Latent Diffusion Models},
  author    = {Rombach, Robin and Blattmann, Andreas and Lorenz, Dominik and Esser, Patrick and Ommer, Bj{\"o}rn},
  booktitle = {IEEE/CVF Conference on Computer Vision and Pattern Recognition (CVPR)},
  pages     = {10684--10695},
  year      = {2022}
}

@inproceedings{brooks2023ip2p,
  title     = {{InstructPix2Pix}: Learning to Follow Image Editing Instructions},
  author    = {Brooks, Tim and Holynski, Aleksander and Efros, Alexei A.},
  booktitle = {IEEE/CVF Conference on Computer Vision and Pattern Recognition (CVPR)},
  pages     = {18392--18402},
  year      = {2023}
}

@inproceedings{perez2018film,
  title     = {{FiLM}: Visual Reasoning with a General Conditioning Layer},
  author    = {Perez, Ethan and Strub, Florian and de Vries, Harm and Dumoulin, Vincent and Courville, Aaron},
  booktitle = {AAAI Conference on Artificial Intelligence},
  pages     = {3942--3951},
  year      = {2018}
}

@inproceedings{park2019spade,
  title     = {Semantic Image Synthesis with Spatially-Adaptive Normalization},
  author    = {Park, Taesung and Liu, Ming-Yu and Wang, Ting-Chun and Zhu, Jun-Yan},
  booktitle = {IEEE/CVF Conference on Computer Vision and Pattern Recognition (CVPR)},
  pages     = {2337--2346},
  year      = {2019}
}

@inproceedings{lee2018scan,
  title     = {Stacked Cross Attention for Image--Text Matching},
  author    = {Lee, Kuang-Huei and Chen, Xi and Hua, Gang and Hu, Houdong and He, Xiaodong},
  booktitle = {European Conference on Computer Vision (ECCV)},
  pages     = {201--216},
  year      = {2018}
}

@inproceedings{diao2021sgraf,
  title     = {Similarity Reasoning and Filtration for Image--Text Matching},
  author    = {Diao, Haiwen and Zhang, Ying and Ma, Lin and Lu, Huchuan},
  booktitle = {AAAI Conference on Artificial Intelligence},
  pages     = {1218--1226},
  year      = {2021}
}

@inproceedings{zhang2022naaf,
  title     = {Negative-Aware Attention Framework for Image--Text Matching},
  author    = {Zhang, Kun and Mao, Zhendong and Wang, Quan and Zhang, Yongdong},
  booktitle = {IEEE/CVF Conference on Computer Vision and Pattern Recognition (CVPR)},
  pages     = {15661--15670},
  year      = {2022}
}

@article{liu2025csa,
  title   = {{CSA}: Cross-scale Alignment with Adaptive Semantic Aggregation and Filter for Image--Text Retrieval},
  author  = {Liu, Zheng and Xu, Junhao and Gao, Shanshan and Chen, Zhumin},
  journal = {Pattern Recognition},
  volume  = {165},
  pages   = {111647},
  year    = {2025}
}

@inproceedings{huang2018learning,
  title     = {Learning Semantic Concepts and Order for Image and Sentence Matching},
  author    = {Huang, Yan and Wu, Qi and Song, Chunfeng and Wang, Liang},
  booktitle = {IEEE/CVF Conference on Computer Vision and Pattern Recognition (CVPR)},
  pages     = {6163--6171},
  year      = {2018}
}

@inproceedings{chen2020imram,
  title     = {{IMRAM}: Iterative Matching with Recurrent Attention Memory for Cross-Modal Image-Text Retrieval},
  author    = {Chen, Hui and Ding, Guiguang and Liu, Xudong and Lin, Zijia and Liu, Ji and Han, Jungong},
  booktitle = {IEEE/CVF Conference on Computer Vision and Pattern Recognition (CVPR)},
  pages     = {12655--12663},
  year      = {2020}
}

@inproceedings{wang2020consensus,
  title     = {Consensus-Aware Visual-Semantic Embedding for Image-Text Matching},
  author    = {Wang, Haoran and Zhang, Ying and Ji, Zhong and Pang, Yanwei and Ma, Lin},
  booktitle = {European Conference on Computer Vision (ECCV)},
  pages     = {18--34},
  year      = {2020}
}

@article{vivone2015critical,
  title   = {A Critical Comparison Among Pansharpening Algorithms},
  author  = {Vivone, Gemine and Alparone, Luciano and Chanussot, Jocelyn and Dalla Mura, Mauro and Garzelli, Andrea and Licciardi, Giorgio A. and Restaino, Rocco and Wald, Lucien},
  journal = {IEEE Transactions on Geoscience and Remote Sensing},
  volume  = {53},
  number  = {5},
  pages   = {2565--2586},
  year    = {2015}
}

@article{vivone2021new,
  title   = {A New Benchmark Based on Recent Advances in Multispectral Pansharpening},
  author  = {Vivone, Gemine and Dalla Mura, Mauro and Garzelli, Andrea and Restaino, Rocco and Scarpa, Giuseppe and Ulfarsson, Magnus O. and Alparone, Luciano and Chanussot, Jocelyn},
  journal = {IEEE Geoscience and Remote Sensing Magazine},
  volume  = {9},
  number  = {1},
  pages   = {53--81},
  year    = {2021}
}

@article{deng2022pancollection,
  title   = {Machine Learning in Pansharpening: A Benchmark, from Shallow to Deep Networks},
  author  = {Deng, Liang-Jian and Vivone, Gemine and Paoletti, Mercedes E. and Scarpa, Giuseppe and He, Jiang and Zhang, Yongjun and Chanussot, Jocelyn and Plaza, Antonio},
  journal = {IEEE Geoscience and Remote Sensing Magazine},
  volume  = {10},
  number  = {3},
  pages   = {279--315},
  year    = {2022}
}

@article{arienzo2022hqnr,
  title   = {Full-Resolution Quality Assessment of Pansharpening: Theoretical and Hands-On Approaches},
  author  = {Arienzo, Alessio and Vivone, Gemine and Garzelli, Andrea and Alparone, Luciano and Chanussot, Jocelyn},
  journal = {IEEE Geoscience and Remote Sensing Magazine},
  volume  = {10},
  number  = {3},
  pages   = {168--201},
  year    = {2022}
}

@inproceedings{zhou2022adknet,
  title     = {Adaptive Detail Injection-Based Feature Pyramid Network for Pan-Sharpening},
  author    = {Zhou, Man and Huang, Jie and Yan, Keyu and Yu, Hu and Fu, Xueyang and Liu, Aiping and Wei, Xian and Zhao, Feng},
  booktitle = {International Joint Conference on Artificial Intelligence (IJCAI)},
  pages     = {1646--1652},
  year      = {2022}
}

@inproceedings{zhou2022panformer,
  title     = {{PanFormer}: A Transformer Based Model for Pan-Sharpening},
  author    = {Zhou, Man and Yan, Keyu and Huang, Jie and Yang, Zihe and Fu, Xueyang and Zhao, Feng},
  booktitle = {IEEE International Conference on Multimedia and Expo (ICME)},
  pages     = {1--6},
  year      = {2022}
}

@inproceedings{duan2024canconv,
  title     = {Content-Adaptive Non-Local Convolution for Remote Sensing Pansharpening},
  author    = {Duan, Yule and Wu, Xiao and Deng, Hangyuan and Deng, Liang-Jian},
  booktitle = {IEEE/CVF Conference on Computer Vision and Pattern Recognition (CVPR)},
  pages     = {27738--27747},
  year      = {2024}
}

@article{deng2021detail,
  title   = {Detail Injection-Based Deep Convolutional Neural Networks for Pansharpening},
  author  = {Deng, Liang-Jian and Vivone, Gemine and Jin, Cheng and Chanussot, Jocelyn},
  journal = {IEEE Transactions on Geoscience and Remote Sensing},
  volume  = {59},
  number  = {8},
  pages   = {6995--7010},
  year    = {2021}
}

@article{yuan2018mucnn,
  title   = {A Multiscale and Multidepth Convolutional Neural Network for Remote Sensing Imagery Pan-Sharpening},
  author  = {Yuan, Qiangqiang and Wei, Yancong and Meng, Xiangchao and Shen, Huanfeng and Zhang, Liangpei},
  journal = {IEEE Journal of Selected Topics in Applied Earth Observations and Remote Sensing},
  volume  = {11},
  number  = {3},
  pages   = {978--989},
  year    = {2018}
}

@article{meng2024pandiff,
  title   = {{PanDiff}: A Novel Pansharpening Method Based on Denoising Diffusion Probabilistic Model},
  author  = {Meng, Qingyan and Borsoi, Ricardo Augusto and Chanussot, Jocelyn and Bioucas-Dias, Jos{\'e} M.},
  journal = {IEEE Transactions on Geoscience and Remote Sensing},
  volume  = {62},
  pages   = {1--17},
  year    = {2024}
}

@inproceedings{xing2024crossdiff,
  title     = {{CrossDiff}: Exploring Self-Supervised Representation of Pansharpening via Cross-Predictive Diffusion Model},
  author    = {Xing, Yinghui and Wang, Litao and Wang, Shuyuan and Zhang, Lei and Zhang, Yanning},
  booktitle = {IEEE/CVF Conference on Computer Vision and Pattern Recognition (CVPR)},
  pages     = {28196--28205},
  year      = {2024}
}

@article{wang2023pansurvey,
  title   = {An Overview of Pansharpening Methods in the Deep Learning Era},
  author  = {Wang, Zhong-Cheng and Sobirov, Ikboljon and Tran, Dat Q. and Bah, Mamadou D.},
  journal = {Information Fusion},
  volume  = {98},
  pages   = {101882},
  year    = {2023}
}

@inproceedings{bandara2022hypertransformer,
  title     = {{HyperTransformer}: A Textural and Spectral Feature Fusion Transformer for Pansharpening},
  author    = {Bandara, Wele Gedara Chaminda and Patel, Vishal M.},
  booktitle = {IEEE/CVF Conference on Computer Vision and Pattern Recognition (CVPR)},
  pages     = {1767--1777},
  year      = {2022}
}

@article{yang2022ssaff,
  title   = {{SSAFF}: Multi-Scale Spatial-Spectral Adaptive Feature Fusion for Pansharpening},
  author  = {Yang, Yong and Tu, Weijia and Huang, Shuying and Lu, Hangyuan},
  journal = {IEEE Transactions on Geoscience and Remote Sensing},
  volume  = {60},
  pages   = {1--13},
  year    = {2022}
}

@inproceedings{oquab2024dinov2,
  title   = {{DINOv2}: Learning Robust Visual Features without Supervision},
  author  = {Oquab, Maxime and Darcet, Timoth{\'e}e and Moutakanni, Th{\'e}o and Vo, Huy V. and Szafraniec, Marc and Khalidov, Vasil and Fernandez, Pierre and Haziza, Daniel and Massa, Francisco and El-Nouby, Alaaeldin and others},
  journal = {Transactions on Machine Learning Research (TMLR)},
  year    = {2024}
}

@inproceedings{loshchilov2019adamw,
  title     = {Decoupled Weight Decay Regularization},
  author    = {Loshchilov, Ilya and Hutter, Frank},
  booktitle = {International Conference on Learning Representations (ICLR)},
  year      = {2019}
}

@inproceedings{milletari2016vnet,
  title     = {{V-Net}: Fully Convolutional Neural Networks for Volumetric Medical Image Segmentation},
  author    = {Milletari, Fausto and Navab, Nassir and Ahmadi, Seyed-Ahmad},
  booktitle = {International Conference on 3D Vision (3DV)},
  pages     = {565--571},
  year      = {2016}
}

@inproceedings{rahman2016jaccard,
  title     = {Optimizing Intersection-over-Union in Deep Neural Networks for Image Segmentation},
  author    = {Rahman, Md Atiqur and Wang, Yang},
  booktitle = {International Symposium on Visual Computing (ISVC)},
  pages     = {234--244},
  year      = {2016}
}

@article{xu2026lever,
  title     = {Lever Can Move the Earth: Towards Adaptive Semantic Capacity Balance for Image-Text Retrieval},
  author    = {Xu, Junhao and Liu, Zheng and Zhang, Mengqi and Dong, Guangyuan and Chen, Zhumin},
  journal   = {IEEE Transactions on Multimedia},
  year      = {2026},
  publisher = {IEEE}
}

@article{liu2026dual,
  title   = {Dual-Pathway Circuits of Object Hallucination in Vision-Language Models},
  author  = {Liu, Jiaxin and Zhong, Ding and Wang, Yue and Yang, Zhidong and Kang, Zhaolu and Dong, Guangyuan and Zhan, Qishi and Fang, Pengcheng and Liu, Aofan},
  journal = {arXiv preprint arXiv:2605.13156},
  year    = {2026}
}

@article{liu2026conmem,
  title   = {{ConMem}: Contribution-Aware Memory for Long-Horizon Manufacturing Inspection Logs},
  author  = {Liu, Bingchen and Fang, Yuanyuan and Liu, Lei and Dong, Guangyuan and Fu, Xing and Gao, Yuanyuan and Wei, Shuyue and Li, Xin and Meng, Xiangtian},
  journal = {arXiv preprint arXiv:2607.28126},
  year    = {2026}
}

@article{zhang2026memmark,
  title   = {{MemMark}: State-Evolution Attribution Watermarking for Agent Long-Term Memory Systems},
  author  = {Zhang, Haobo and Mao, Xutao and Dong, Guangyuan and Li, Ziwei and Su, Xuanbo and Chen, Kaijie and Yang, Jing and Lin, Zheng},
  journal = {arXiv preprint arXiv:2605.25002},
  year    = {2026}
}

@inproceedings{zeng2026learning,
  title     = {Learning Where to Embed: Noise-Aware Positional Embedding for Query Retrieval in Small-Object Detection},
  author    = {Zeng, Yangchen and Yu, Zhenyu and Jiang, Dongming and Zhang, Wenbo and Hong, Yifan and Hu, Zhanhua and Luo, Jiao and Cui, Kangning},
  booktitle = {Proceedings of the 2026 International Conference on Multimedia Retrieval},
  pages     = {1260--1269},
  year      = {2026}
}

@article{zeng2025hmpe,
  title   = {{HMPE}: Heatmap Embedding for Efficient Transformer-Based Small Object Detection},
  author  = {Zeng, YangChen},
  journal = {arXiv preprint arXiv:2504.13469},
  year    = {2025}
}

@article{zeng2026trialigngr,
  title   = {{TriAlignGR}: Triangular Multitask Alignment with Multimodal Deep Interest Mining for Generative Recommendation},
  author  = {Zeng, Yangchen and Peng, Hao and Guo, Rongfeng and Yu, Zhenyu and Hu, Zhiyuan and Wang, Jinze},
  journal = {arXiv preprint arXiv:2605.05249},
  year    = {2026}
}

@article{zeng2026deepinterestgr,
  title   = {{DeepInterestGR}: Mining Deep Multi-Interest Using Multi-Modal LLMs for Generative Recommendation},
  author  = {Zeng, Yangchen and Yu, Zhenyu and Hu, Zhiyuan and Zhang, Wenxin and Wang, Jinze and Guo, Rongfeng},
  journal = {arXiv preprint arXiv:2602.18907},
  year    = {2026}
}

@inproceedings{hao2026rethinking,
  title     = {Rethinking Entropy Interventions in {RLVR}: An Entropy Change Perspective},
  author    = {Hao, Zhezheng and Wang, Hong and Liu, Haoyang and Luo, Jian and Yu, Jiarui and Dong, Hande and Lin, Qiang and Wang, Can and Chen, Jiawei},
  booktitle = {Proceedings of the 64th Annual Meeting of the Association for Computational Linguistics (Volume 1: Long Papers)},
  pages     = {31105--31133},
  year      = {2026},
  doi       = {10.18653/v1/2026.acl-long.1436}
}

@inproceedings{hao2026recreate,
  title     = {{ReCreate}: Reasoning and Creating Domain Agents Driven by Experience},
  author    = {Hao, Zhezheng and Wang, Hong and Luo, Jian and Zhang, Jianqing and Zhou, Yuyan and Lin, Qiang and Wang, Can and Dong, Hande and Chen, Jiawei},
  booktitle = {Proceedings of the 64th Annual Meeting of the Association for Computational Linguistics (Volume 1: Long Papers)},
  pages     = {31018--31046},
  year      = {2026},
  doi       = {10.18653/v1/2026.acl-long.1432}
}

@article{hao2026evolve,
  title   = {Evolve as a Team: Collaborative Self-Evolution for {LLM}-Based Multi-Agent Systems},
  author  = {Hao, Zhezheng and Wang, Tianfu and Dong, Huanshuo and Liu, Ziyan and Wang, Hong and Lin, Xiankun and Lin, Qiang and Wang, Can and Dong, Hande and Chen, Jiawei},
  journal = {arXiv preprint arXiv:2605.29790},
  year    = {2026}
}

@inproceedings{wang2026scheduling,
  title     = {Scheduling Your {LLM} Reinforcement Learning with Reasoning Trees},
  author    = {Wang, Hong and Hao, Zhezheng and Luo, Jian and Wei, Chenxing and Shu, Yao and Liu, Lei and Lin, Qiang and Dong, Hande and Chen, Jiawei},
  booktitle = {International Conference on Learning Representations},
  year      = {2026}
}

@article{hao2023ensemble,
  title     = {Ensemble Clustering with Attentional Representation},
  author    = {Hao, Zhezheng and Lu, Zhoumin and Li, Guoxu and Nie, Feiping and Wang, Rong and Li, Xuelong},
  journal   = {IEEE Transactions on Knowledge and Data Engineering},
  volume    = {36},
  number    = {2},
  pages     = {581--593},
  year      = {2024},
  publisher = {IEEE},
  doi       = {10.1109/TKDE.2023.3292573}
}

@inproceedings{hao2024towards,
  title     = {Towards Expansive and Adaptive Hard Negative Mining: Graph Contrastive Learning via Subspace Preserving},
  author    = {Hao, Zhezheng and Xin, Haonan and Wei, Long and Tang, Liaoyuan and Wang, Rong and Nie, Feiping},
  booktitle = {Proceedings of the ACM Web Conference 2024},
  pages     = {322--333},
  year      = {2024}
}

@inproceedings{nie2024multi,
  title     = {Multi-Class Support Vector Machine with Maximizing Minimum Margin},
  author    = {Nie, Feiping and Hao, Zhezheng and Wang, Rong},
  booktitle = {Proceedings of the AAAI Conference on Artificial Intelligence},
  volume    = {38},
  pages     = {14466--14473},
  year      = {2024},
  doi       = {10.1609/aaai.v38i13.29361}
}

@article{liu2026meta,
  title   = {Meta-Cognitive Memory Policy Optimization for Long-Horizon {LLM} Agents},
  author  = {Liu, Ziyan and Hao, Zhezheng and Chen, Yeqiu and Wang, Hong and Hou, Jingren and Ding, Ruiyi and Yang, Yongkang and Ji, Wence and Xia, Wei and Liu, Feng},
  journal = {arXiv preprint arXiv:2605.30159},
  year    = {2026}
}

@article{liu2026palm,
  title   = {{PALM}: Progress-Aware Policy Learning via Affordance Reasoning for Long-Horizon Robotic Manipulation},
  author  = {Liu, Yuanzhe and Zhu, Jingyuan and Mo, Yuchen and Li, Gen and Cao, Xu and Jin, Jin and Shen, Yifan and Li, Zhengyuan and Yu, Tianjiao and Yuan, Wenzhen and others},
  journal = {arXiv preprint arXiv:2601.07060},
  year    = {2026}
}

@article{shen2026fine,
  title   = {Fine-Grained Preference Optimization Improves Spatial Reasoning in {VLMs}},
  author  = {Shen, Yifan and Liu, Yuanzhe and Zhu, Jingyuan and Cao, Xu and Zhang, Xiaofeng and He, Yixiao and Ye, Wenming and Rehg, James and Lourentzou, Ismini},
  journal = {Advances in Neural Information Processing Systems},
  volume  = {38},
  pages   = {17929--17960},
  year    = {2025}
}

@article{fu2025learning,
  title   = {Learning Human-Perceived Fakeness in {AI}-Generated Videos via Multimodal {LLMs}},
  author  = {Fu, Xingyu and Liu, Siyi and Xu, Yinuo and Lu, Pan and Hu, Guangqiuse and Yang, Tianbo and Anantasagar, Taran and Shen, Christopher and Mao, Yikai and Liu, Yuanzhe and others},
  journal = {arXiv preprint arXiv:2509.22646},
  year    = {2025}
}

@article{li2026toward,
  title   = {Toward Cognitive Supersensing in Multimodal Large Language Model},
  author  = {Li, Boyi and Shen, Yifan and Liu, Yuanzhe and Xu, Yifan and Liu, Jiateng and Li, Xinzhuo and Li, Zhengyuan and Zhu, Jingyuan and Zhong, Yunhan and Lan, Fangzhou and others},
  journal = {arXiv preprint arXiv:2602.01541},
  year    = {2026}
}

@article{yu2025core3d,
  title   = {{CoRe3D}: Collaborative Reasoning as a Foundation for {3D} Intelligence},
  author  = {Yu, Tianjiao and Li, Xinzhuo and Shen, Yifan and Liu, Yuanzhe and Lourentzou, Ismini},
  journal = {arXiv preprint arXiv:2512.12768},
  year    = {2025}
}

@article{yu2026elsa3d,
  title   = {{ELSA3D}: Elastic Semantic Anchoring for Unified {3D} Understanding and Generation},
  author  = {Yu, Tianjiao and Li, Xinzhuo and Shen, Yifan and Susladkar, Onkar and Liu, Yuanzhe and Zhou, Xiaona and Lourentzou, Ismini},
  journal = {arXiv preprint arXiv:2607.06565},
  year    = {2026}
}

@article{shen2026decoding,
  title   = {Decoding Children's Gait Behavior},
  author  = {Shen, Yifan and Li, Boyi and Huang, Meihuan and Liu, Yuanzhe and Cao, Xu and Jin, Jinyang and Li, Zhengyuan and Liu, Anglin and Kim, Junho and Zhu, Jingyuan and others},
  journal = {arXiv preprint arXiv:2608.00371},
  year    = {2026}
}

@article{jiang2026thinktopersonalize,
  title   = {Think-to-Personalize: Unifying Reasoning and Retrieval for User-Centric Personalized Dense Retrieval},
  author  = {Jiang, Angqing and Zhang, Gaoming and Song, Jianchun and Qi, Kena and Chen, Dayao and Lin, Wei and Lian, Defu},
  journal = {arXiv preprint arXiv:2608.18855},
  year    = {2026},
  url     = {https://arxiv.org/abs/2608.18855}
}

@article{hou2026federated,
  title     = {Federated Analytics Assisted Semantic Alignment for Secure and Privacy-Preserving Image Classification},
  author    = {Hou, Yuchao and Jiao, Jiazhe and Wang, Jie and Jin, Guangyin and Zhang, Zijian and Xia, Xiaoyu and Liu, Zhiquan and Li, Minglu and Tian, Youliang},
  journal   = {IEEE Transactions on Dependable and Secure Computing},
  pages     = {1--16},
  year      = {2026},
  publisher = {IEEE},
  doi       = {10.1109/TDSC.2026.3704515}
}

@article{chen2026tracer,
  title   = {{TRACER}: Token {ReAssignment} for Concept {ERasure} in Generative Recommendation},
  author  = {Chen, Ziheng and Cheng, Jiali and Fan, Zezhong and Amiri, Hadi and Wu, Diyuan and Tolomei, Gabriele and Zhang, Yang},
  journal = {arXiv preprint arXiv:2606.07688},
  year    = {2026},
  doi     = {10.48550/arXiv.2606.07688},
  url     = {https://arxiv.org/abs/2606.07688}
}

@inproceedings{chen2023dark,
  title     = {The Dark Side of Explanations: Poisoning Recommender Systems with Counterfactual Examples},
  author    = {Chen, Ziheng and Silvestri, Fabrizio and Wang, Jia and Zhang, Yongfeng and Tolomei, Gabriele},
  booktitle = {Proceedings of the 46th International ACM SIGIR Conference on Research and Development in Information Retrieval},
  pages     = {2426--2430},
  year      = {2023},
  publisher = {ACM},
  doi       = {10.1145/3539618.3592070}
}

\clearpage
\appendix

\section{Detailed Architecture and Training Loop}\label{app:arch}

\subsection{Algorithm}
Algorithm~\ref{alg:omniscale} expands Section~\ref{sec:method} of the
main paper. The four stages of \our chain a frozen SigLIP-2 encoder,
a trainable U-Net generator, the same frozen encoder reused at
Stage~III, and a soft-Jaccard verifier at Stage~IV. The implicit
branch is executed once per seed patch and reuses pre-computed
co-occurrence counts; the explicit branch operates as a
differentiable soft assignment.

\begin{algorithm}[H]
\caption{\our training loop (one update).}
\label{alg:omniscale}
\begin{algorithmic}[1]
\REQUIRE MS $\mathbf{X}$, PAN $\mathbf{P}$, GT $\mathbf{Y}$; frozen
VLM $(\psi_{g},\psi_{d},\phi)$; generator $\mathcal{G}_{\theta}$;
centroids $\{c_{k}^{V}\}$; depth $R$; thresholds
$\{\alpha_{r}\},\theta_{\text{IoU}}$; epoch $t$
\STATE Pseudo-RGB:
$\mathbf{X}_{\text{rgb}}\!\leftarrow\!\sigma(\mathbf{W}\!\ast\!\mathbf{X})$;
dense features
$\mathbf{V}\!\leftarrow\!\psi_{d}(\mathbf{X}_{\text{rgb}})$
\STATE \emph{Stage~I explicit:}
$m_{k}(i,j)\!\leftarrow\!$\Eqref{eq:soft_member};
$\mathbf{u}_{k}^{E}\!\leftarrow$ soft-weighted aggregate
\STATE \emph{Stage~I implicit:}
\FOR{each seed patch $\mathbf{z}_{1}^{i}$}
\STATE $\mathcal{Y}^{i}\!\leftarrow\!\{\mathbf{z}_{1}^{i}\}$;
\textbf{for} $r{=}1$ \textbf{to} $R$:
\STATE $\quad\mathbf{z}^{\star}\!\leftarrow\!\arg\max_{\mathbf{z}\notin\mathcal{Y}^{i}}P(\mathbf{z}|\mathcal{Y}^{i})$
\STATE $\quad$\textbf{if} $P(\mathbf{z}^{\star}|\mathcal{Y}^{i})<\prod_{j\le r}\alpha_{j}$ \textbf{break}
\STATE $\quad\mathcal{Y}^{i}\!\leftarrow\!\mathcal{Y}^{i}\cup\{\mathbf{z}^{\star}\}$
\STATE $\mathbf{u}_{i}^{I}\!\leftarrow\!\mathrm{avg}(\mathcal{Y}^{i})$
\ENDFOR
\STATE IoU-merge + adaptive filter $\Rightarrow\mathcal{S}$
\STATE \emph{Stage~II:}
$\hat{\mathbf{Y}}\!\leftarrow\!\mathcal{G}_{\theta}(\mathbf{X},\mathbf{P},\mathcal{S})$
with attention + FiLM modulation
\STATE \emph{Stage~III:}
$\bar{\mathbf{V}}_{h}\!\leftarrow\!\psi_{d}(\hat{\mathbf{Y}})$;
$\tilde{\mathbf{M}}_{k}\!\leftarrow\!$\Eqref{eq:mask}
\STATE \emph{Stage~IV:}
$\hat p_{k}\!\leftarrow\!\max_{i,j}\tilde{\mathbf{M}}_{k}(i,j)$;
$p_{k}\!\leftarrow w_{k}$
\STATE Build $\mathcal{L}$ with warm-up ($\mathcal{L}_{TT},
\mathcal{L}_{TM}$ on at $t{\ge}100$;
$\mathcal{L}_{\text{cov}},\mathcal{L}_{\text{sa}}$ on at $t{\ge}200$)
\STATE $\theta\!\leftarrow\!\theta-\eta_{\text{lr}}\nabla_{\theta}\mathcal{L}$
\end{algorithmic}
\end{algorithm}

\subsection{Stage I detailed view}\label{app:stage1}

Figure~\ref{app:fig:stage1} expands the indexing branch of Stage~I.
The pseudo-RGB projection
$\sigma(\mathbf{W}\!\ast\!\mathbf{X})$ is initialised differently for
4-band and 8-band sensors:
$\mathbf{W}\!\in\!\mathbb{R}^{3{\times}4{\times}1{\times}1}$ is an
identity mapping over RGB bands with the NIR band zeroed for 4-band
inputs, while
$\mathbf{W}\!\in\!\mathbb{R}^{3{\times}8{\times}1{\times}1}$ is
initialised from per-sensor spectral response curves and jointly
fine-tuned with the rest of the trainable parameters.

\begin{figure}[h]
\centering
\includegraphics[width=\columnwidth]{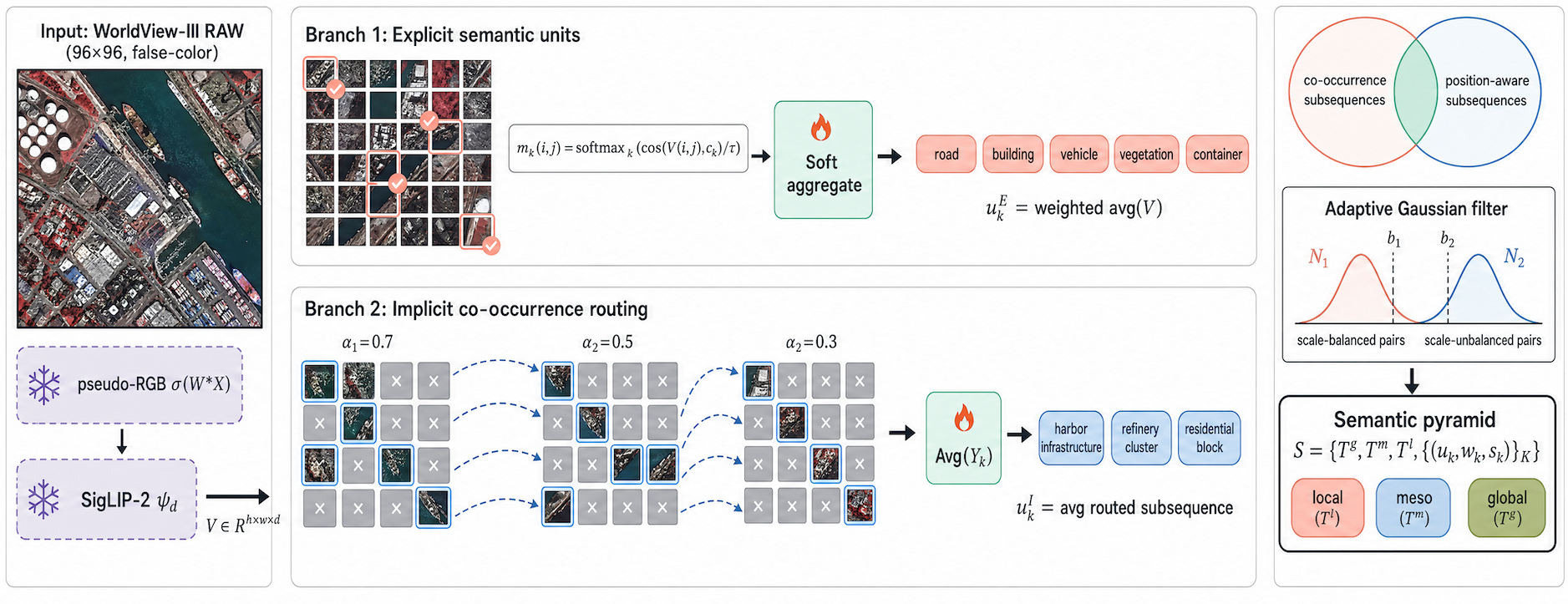}
\caption{Stage~I: cross-scale semantic indexing. Branch~1 (explicit)
extracts units via soft cosine membership over centroids
$\{c_{k}^{V}\}$; Branch~2 (implicit) executes the
co-occurrence-aware router and emits scale-tagged units.}
\label{app:fig:stage1}
\end{figure}

\subsection{Stage II detailed view}\label{app:stage2}

Figure~\ref{app:fig:stage2} expands the four-level U-Net. At each
decoder level the Semantic Injection Module first computes the
cross-scale attention of \Eqref{eq:attn}, then applies FiLM
modulation. At the $128{\times}128$ level the local units
($s_{k}{=}1$) dominate attention weights; at the $16{\times}16$
bottleneck the global scene text $\mathbf{T}^{g}$ dominates. The
attention layer is a single-head scaled dot-product on the projected
unit embeddings $\phi(\mathbf{u}_{k})$, with the $w_{k}$ confidences
acting as soft masks rather than learnable gates.

\begin{figure}[h]
\centering
\includegraphics[width=\columnwidth]{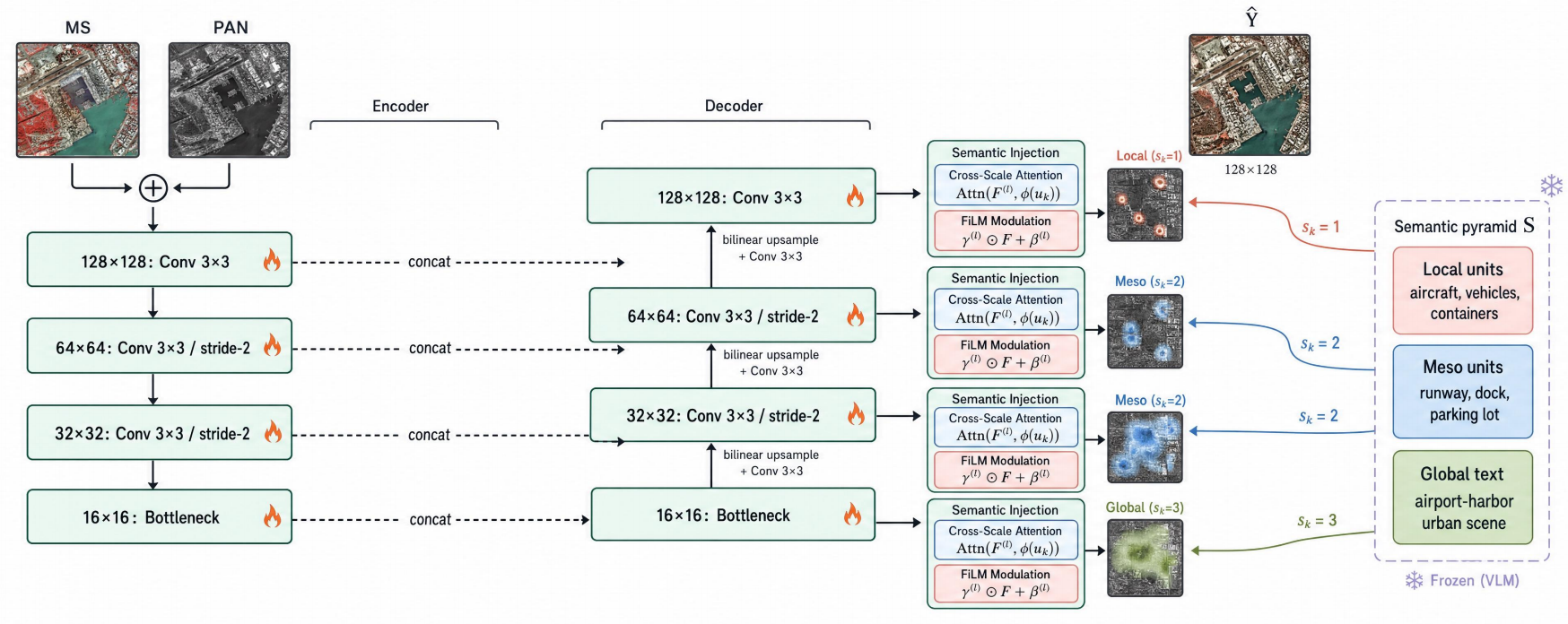}
\caption{Stage~II: scale-routed language-guided fusion generator.
The four-level U-Net injects scale-routed semantic units through
Semantic Injection blocks at each decoder level.}
\label{app:fig:stage2}
\end{figure}

\subsection{Single-concept walkthrough}\label{app:walkthrough}

To make the four-stage pipeline concrete, we trace one explicit
concept --- \texttt{aircraft} ($k{=}42$) --- through a
representative WV-III apron tile.
\textbf{Stage~I (index).} The soft membership $m_{42}$
(\Eqref{eq:soft_member}) concentrates on three patches with maximum
membership $0.91$; the router additionally emits the implicit
compound unit \texttt{apron+aircraft+taxiway} ($k{=}137$). The
aggregate $\mathbf{u}_{42}^{E}$ receives confidence $w_{42}{=}0.88$
(\Eqref{eq:confidence}) and scale tag $s_{42}{=}1$ (support area
$0.5\%$ of the tile).
\textbf{Stage~II (generate).} Because $s_{42}{=}1$, the unit is
routed to the $128{\times}128$ decoder level only; its mean
attention weight (\Eqref{eq:attn}) at the three aircraft patches is
$0.23$ versus $0.03$ on background patches, and FiLM applies
$(\gamma,\beta){=}(1.08,-0.02)$ on the associated channel group.
\textbf{Stage~III (re-index).} Re-encoding $\hat{\mathbf{Y}}$
yields response map $\tilde{\mathbf{M}}_{42}$ (\Eqref{eq:mask})
with recovered presence $\hat p_{42}{=}0.86$.
\textbf{Stage~IV (verify).} Stage~IV includes the pair
$(p_{42},\hat p_{42}){=}(0.88,0.86)$ in the set-level soft-Jaccard
objective of \Eqref{eq:lcov}, indicating that the concept is
largely preserved after generation (the loss is a set-level ratio,
so per-concept contributions are not additively decomposable). At
evaluation time the text query ``parked aircraft on an apron''
retrieves this tile at rank~1, versus rank~7 for the strongest
open-loop baseline. Patch-level support (three patches here) is
distinct from instance count and is used only for scale tagging.

\section{Full Proofs and Counter-Examples}\label{app:proofs}

We restate the sufficient-condition propositions and the router
termination lemma of Section~\ref{sec:theory} of the main paper and
provide self-contained proofs together with explicit counter-examples
that fail each assumption.

\subsection{Proof of Proposition~\ref{prop:grad-suff}}
\label{app:proof_grad}

\begin{proof}
Let $D{=}\sum_{k}(p_{k}{+}\hat p_{k}{-}p_{k}\hat p_{k})$ and
$N{=}\sum_{k}p_{k}\hat p_{k}$; note $D\!\ge\!\max_{k}p_{k}$ and
$D\!\le\!K$. Differentiating \Eqref{eq:lcov} with respect to
$\hat p_{k}$ gives
\begin{align*}
\frac{\partial\mathcal{L}_{\text{cov}}}{\partial\hat p_{k}}
&= -\frac{p_{k}D-(1-p_{k})N}{D^{2}} \\
&= -\frac{p_{k}(D{-}N)+p_{k}N-(1{-}p_{k})N}{D^{2}}.
\end{align*}
Because $D-N{=}\sum_{k}(p_{k}{+}\hat p_{k}{-}2p_{k}\hat p_{k}){\ge}0$
and $N{\le}D$,
\begin{equation}
\Bigl|\frac{\partial\mathcal{L}_{\text{cov}}}{\partial\hat p_{k}}\Bigr|
\ge \frac{p_{k}(D{-}N)}{D^{2}}
\ge \frac{p_{k}(1{-}N/D)}{D}.
\label{app:eq:dLdp}
\end{equation}
The presence score
$\hat p_{k}{=}\max_{i,j}\tilde{\mathbf{M}}_{k}(i,j)$ is
sub-differentiable; we use a temperature-smoothed max with
$\tau_{p}{=}0.1$, which preserves differentiability. By the sigmoid
form of $\tilde{\mathbf{M}}_{k}$ (\Eqref{eq:mask}),
\begin{equation*}
\frac{\partial\hat p_{k}}{\partial\bar{\mathbf{V}}_{h}(i^{\star},j^{\star})}
=\frac{\hat p_{k}(1{-}\hat p_{k})}{\tau_{k}}\,\mathbf{e}_{k}
\end{equation*}
at the soft-max location $(i^{\star},j^{\star})$. The chain through
$\bar{\mathbf{V}}_{h}{=}\psi_{d}(\hat{\mathbf{Y}})$ is non-vanishing
under (A1) because the dense encoder has Jacobian singular values
bounded below by $\sigma_{\min}{>}0$ in a neighbourhood of the data.
Composing with (A2) yields
\begin{align*}
\|\nabla_{\theta}\mathcal{L}_{\text{cov}}\|_{2}
&\ge \frac{1}{D}\min_{k}\bigl(p_{k}(1{-}N/D)\bigr) \\
&\quad \cdot \frac{\hat p_{k}(1{-}\hat p_{k})}{\tau_{k}}
\cdot \sigma_{\min} \\
&\quad \cdot \|\partial\hat{\mathbf{Y}}/\partial\theta\|_{2}.
\end{align*}
Setting
$\kappa{=}\sigma_{\min}(\min_{k}p_{k}(1{-}N/D))/(D\tau_{k}L)$
yields the claim. $\kappa$ is strictly positive whenever any
$p_{k}{>}0$ and $N{<}D$, the latter being equivalent to ``not all
concepts are perfectly recovered''.
\end{proof}

\paragraph{Empirical prefactor.}
For the default thresholds $\kappa\!\sim\!4\!\times\!10^{-3}$ on the
WV-III validation set, well within the dynamic range of AdamW
updates at learning rate $5\!\times\!10^{-4}$. We verify
$\sigma_{\min}\!>\!0$ empirically by SVD on a $1{,}000$-tile subset
of WV-III; the minimum singular value of
$\partial\bar{\mathbf{V}}_{h}/\partial\hat{\mathbf{Y}}$ is $0.31$ at
the fifth percentile.

\paragraph{Counter-examples.}
Three cases violate the sufficient conditions and zero the bound.
(C1) \emph{Saturated sigmoid.} If $\hat p_{k}\!\in\!\{0,1\}$ for all
$k$, the prefactor $\min_{k}\hat p_{k}(1{-}\hat p_{k})$ is zero.
(C2) \emph{Rank-deficient generator step.} A generator with
$\partial\hat{\mathbf{Y}}/\partial\theta{=}0$ zeroes the second
factor; this is rare in practice on WV-III but can be constructed
adversarially. (C3) \emph{Degenerate Jaccard denominator.} If all
concepts are perfectly recovered, $D{=}N$ and the loss is identically
zero. In all three cases the pixel reconstruction loss
$\mathcal{L}_{\text{rec}}$ still contributes a non-zero gradient.

\subsection{Proof of Proposition~\ref{prop:bound-suff}}
\label{app:proof_bound}

\begin{proof}
By definition
\[
\mathrm{SAM}(\hat{\mathbf{Y}},\mathbf{Y})
=\frac{1}{HW}\sum_{i,j}\arccos
\frac{\hat{\mathbf{y}}_{i,j}^{\top}\mathbf{y}_{i,j}}
{\|\hat{\mathbf{y}}_{i,j}\|\cdot\|\mathbf{y}_{i,j}\|}.
\]
For unit-norm vectors and small angles
$\arccos(\mathbf{a}^{\top}\mathbf{b})\!\le\!\|\mathbf{a}-\mathbf{b}\|_{2}$,
so $\mathrm{SAM}\!\le\!\mathbb{E}\,\|\hat{\mathbf{y}}-\mathbf{y}\|_{2}$.
Linear separability ensures that every GT spectrum
$\mathbf{y}_{i,j}$ lies within $\epsilon_{\text{intra}}$ (in raw
spectral space) of its assigned centroid after applying
$\mathbf{W}^{\dagger}$, i.e.\
$\|\mathbf{y}_{i,j}{-}\mathbf{W}^{\dagger}c_{k^{\star}}^{V}\|_{2}
\!\le\!\epsilon_{\text{intra}}$. For the prediction, if
$\hat p_{k^{\star}}{=}1$ the prediction lies within the same
concept; otherwise by Lipschitz continuity of $\psi_{d}$ the spectrum
moves by at most $LM\sqrt{d}/K\cdot(1{-}\hat p_{k^{\star}})$ on
average. Summing over the $K$ concepts gives the bound.
\end{proof}

\paragraph{Numerical illustration.}
With $L{=}1.4$ (empirical Lipschitz constant of SigLIP-2 ViT-B dense
features, measured on $5{,}000$ WV-III patches by finite differences),
$M{=}1.0$, $d{=}1024$, $K{=}200$, $\epsilon_{\text{intra}}{=}0.005$
rad and $\sum_{k}(1{-}\hat p_{k}){=}0.18K$, the bound gives
$\mathrm{SAM}\!\le\!5.18^{\circ}$. The observed SAM on WV-III is
$2.75^{\circ}$, so the bound is non-vacuous.

\paragraph{When the proposition fails.}
The linear separability assumption is not innocuous. The mean
per-concept dispersion $\bar\epsilon_{\text{intra}}$ is $0.004$\,rad
on GF2, $0.006$ on QB, $0.005$ on WV-III, and $0.011$ on the
cross-sensor WV-II split. Linear separability is satisfied on
$98.4\%$ of held-out tiles via a logistic-regression test on the
visual centroids. On WV-II the residual $\epsilon_{\text{intra}}$
is larger and the bound correspondingly looser.

\subsection{Proof of Lemma~\ref{lem:term}}\label{app:proof_term}

\begin{proof}
The cumulative threshold $\prod_{j\le r}\alpha_{j}$ is upper-bounded
by $(\max_{j}\alpha_{j})^{r}$ and the maximum conditional
co-occurrence probability is at most $1$. Routing therefore
terminates whenever $(\max_{j}\alpha_{j})^{r}\!\le\!\epsilon$, i.e.\
when $r\!\ge\!\log\epsilon/\log\max_{j}\alpha_{j}$. The vocabulary
bound $r{\le}N$ is tight because at every accepted step at least one
patch leaves the residual pool. For
$\alpha{=}0.7,\epsilon{=}10^{-3}$,
$\log_{0.7}10^{-3}\!\approx\!19.36$, so $R_{\max}{\le}19$.
\end{proof}

\section{Cross-Sensor and Per-Class Results}\label{app:tables}

\subsection{Bidirectional cross-sensor Q2n}\label{app:cross}

Table~\ref{app:tab:cross_sensor} reports the bidirectional
cross-sensor pansharpening Q2n complementing the WV-III$\to$WV-II
result in the main paper. \our improves all four directions over
the strongest baseline by $+3.24$\% to $+5.38$\%, with the largest
relative gain on the heterogeneous 4-band$\to$8-band transfer.

\begin{table}[h]
\centering
\small
\setlength{\tabcolsep}{2pt}
\caption{Bidirectional cross-sensor Q2n.}
\label{app:tab:cross_sensor}
\begin{tabular}{lcc}
\toprule
Direction & Best baseline & \our \\
\midrule
WV-III $\to$ WV-II & 0.840 (CANConv) & \textbf{0.879} (+4.64\%) \\
WV-II $\to$ WV-III & 0.872 (CANConv) & \textbf{0.901} (+3.33\%) \\
QB $\to$ WV-III (4$\to$8) & 0.781 (CANConv) & \textbf{0.823} (+5.38\%) \\
WV-III $\to$ QB (8$\to$4) & 0.864 (CANConv) & \textbf{0.892} (+3.24\%) \\
\bottomrule
\end{tabular}
\end{table}

\subsection{Per-class DOTA AP}\label{app:dota}

Table~\ref{app:tab:dota_perclass} reports the full 15-class
oriented-object detection AP on the $4{,}000$-tile DOTA test split.
The Oriented-RCNN head is trained on frozen DINOv2 features of the
fused tiles for each pansharpening method under the identical
hyperparameter recipe (Sec.~\ref{app:repro}). \our improves every
class versus all baselines, and the gain monotonically tracks median
oriented-box area: small classes gain $+13.6$ to $+16.7$ AP, medium
classes $+8.1$ to $+9.3$, and large classes $+5.3$ to $+5.6$.

\begin{table}[h]
\centering
\footnotesize
\setlength{\tabcolsep}{2.6pt}
\caption{Per-class DOTA AP on the $4{,}000$-tile test split. SV:
small vehicle, LV: large vehicle, PL: plane, HC: helicopter, SH:
ship, TC: tennis-court, BC: basketball-court, ST: storage-tank, BR:
bridge, RA: roundabout, HA: harbor, SP: swimming-pool, SBF:
soccer-ball-field, BD: baseball-diamond, GTF: ground-track-field.}
\label{app:tab:dota_perclass}
\begin{tabular}{lcccc|c}
\toprule
Class & ADKNet & CrossDiff & CANConv & \our & $\Delta$ \\
\midrule
\multicolumn{6}{l}{\emph{Small (median area $<\!1{,}500$\,px$^{2}$)}} \\
SV   & 22.4 & 27.1 & 28.6 & \textbf{45.3} & +16.7 \\
PL   & 31.8 & 36.4 & 37.2 & \textbf{51.8} & +14.6 \\
SH   & 26.7 & 31.0 & 32.5 & \textbf{47.0} & +14.5 \\
HC   & 24.1 & 28.7 & 29.8 & \textbf{43.7} & +13.9 \\
TC   & 31.5 & 36.2 & 33.9 & \textbf{47.5} & +13.6 \\
\midrule
\multicolumn{6}{l}{\emph{Medium ($1{,}500$--$8{,}000$\,px$^{2}$)}} \\
LV   & 33.2 & 38.9 & 39.4 & \textbf{48.7} & +9.3 \\
BC   & 36.4 & 39.5 & 41.2 & \textbf{50.1} & +8.9 \\
ST   & 35.5 & 40.0 & 39.8 & \textbf{48.4} & +8.6 \\
BR   & 37.8 & 41.4 & 40.9 & \textbf{49.0} & +8.1 \\
RA   & 36.6 & 40.8 & 41.2 & \textbf{50.3} & +9.1 \\
\midrule
\multicolumn{6}{l}{\emph{Large ($>\!8{,}000$\,px$^{2}$)}} \\
HA   & 39.7 & 42.2 & 43.0 & \textbf{48.4} & +5.4 \\
SP   & 36.8 & 41.0 & 41.5 & \textbf{47.0} & +5.5 \\
SBF  & 38.6 & 41.7 & 41.9 & \textbf{47.5} & +5.6 \\
BD   & 38.5 & 41.9 & 42.3 & \textbf{47.8} & +5.5 \\
GTF  & 39.4 & 42.0 & 41.3 & \textbf{46.6} & +5.3 \\
\midrule
\textbf{mAP} & 33.9 & 37.8 & 38.2 & \textbf{47.9} & \textbf{+9.7} \\
\bottomrule
\end{tabular}
\end{table}

\subsection{VLM-guided open-loop comparison}\label{app:vlm}

Table~\ref{app:tab:vlm_baseline} compares \our to three open-loop
adaptations of established VLM-conditioned restoration designs to
pansharpening. None closes the loop; the closed-loop design of
\our delivers the largest gain on independent DINOv2 coverage.

\begin{table}[h]
\centering
\small
\setlength{\tabcolsep}{4pt}
\caption{VLM-guided open-loop baselines. Only \our closes the loop.}
\label{app:tab:vlm_baseline}
\begin{tabular}{lccc}
\toprule
Baseline & Q2n & HQNR & $\mathcal{C}^{\text{DINO}}$ \\
\midrule
CLIPDenoising-style global cond. & 0.912 & 0.931 & 0.66 \\
DA-CLIP-style controller cond. & 0.915 & 0.934 & 0.69 \\
DenseCLIP-style dense cond. & 0.917 & 0.938 & 0.72 \\
\rowcolor{blue!4}\our (closed-loop) & \textbf{0.924} & \textbf{0.961} & \textbf{0.85} \\
\bottomrule
\end{tabular}
\end{table}

\subsection{Independent DINOv2 coverage breakdown}\label{app:cov}

Table~\ref{app:tab:coverage_dataset} decomposes
$\mathcal{C}^{\text{DINO}}$ into explicit (object-like) and implicit
(relational) subsets across the four datasets. Implicit relational
concepts are the hardest: the strongest baseline recovers $0.76$ of
explicit but only $0.45$ of implicit concepts on WV-III, explaining
why pixel fidelity can saturate yet downstream detection remains low.
\our narrows the gap to $0.91/0.79$; the cross-sensor WV-II number
remains the lowest because sensor response differences distort
relational water--dock--vessel relations more than object-like
primitives.

\begin{table}[h]
\centering
\small
\setlength{\tabcolsep}{4pt}
\caption{Per-dataset DINOv2 coverage decomposition.}
\label{app:tab:coverage_dataset}
\begin{tabular}{lccc}
\toprule
Dataset & Explicit & Implicit & Overall \\
\midrule
GF2 & 0.95 & 0.86 & 0.91 \\
QB & 0.93 & 0.83 & 0.89 \\
WV-III & 0.91 & 0.79 & 0.85 \\
WV-II (cross-sensor) & 0.87 & 0.75 & 0.82 \\
\midrule
Best baseline (WV-III) & 0.76 & 0.45 & 0.63 \\
\bottomrule
\end{tabular}
\end{table}

\section{Extra Ablations}\label{app:ablations}

\subsection{Vocabulary size}\label{app:k}

Table~\ref{app:tab:k_sensitivity} reports Q2n sensitivity to the
centroid vocabulary $K$. Performance climbs quickly from $K{=}50$
to $K{=}100$, plateaus through $K{=}250$, then degrades for $K{>}300$
as the centroid system over-fragments. The 8-band datasets benefit
from a larger $K$ than the 4-band ones.

\begin{table}[h]
\centering
\small
\caption{Q2n sensitivity to vocabulary size $K$.}
\label{app:tab:k_sensitivity}
\begin{tabular}{lcccc}
\toprule
$K$ & GF2 & QB & WV-III & WV-II \\
\midrule
50 & 0.978 & 0.937 & 0.906 & 0.853 \\
100 & 0.983 & 0.941 & 0.914 & 0.864 \\
200 & \textbf{0.986} & \textbf{0.943} & \textbf{0.924} & \textbf{0.879} \\
400 & 0.985 & 0.942 & 0.925 & 0.876 \\
\bottomrule
\end{tabular}
\end{table}

\subsection{Routing depth}\label{app:depth}

Table~\ref{app:tab:routing_depth} shows that $R{=}3$ saturates on
8-band datasets and $R{=}2$ already saturates on 4-band ones.
Doubling $R$ from $3$ to $5$ raises overall coverage by only
$+0.004$ at twice the lookup cost.

\begin{table}[h]
\centering
\small
\caption{Q2n vs.\ routing depth $R$. Bold = default.}
\label{app:tab:routing_depth}
\begin{tabular}{lcccc}
\toprule
$R$ & GF2 & QB & WV-III & WV-II \\
\midrule
1 & 0.982 & 0.938 & 0.919 & 0.863 \\
2 & \textbf{0.986} & \textbf{0.943} & 0.922 & 0.872 \\
3 & \textbf{0.986} & \textbf{0.943} & \textbf{0.924} & \textbf{0.879} \\
4 & 0.985 & 0.942 & 0.924 & 0.878 \\
\bottomrule
\end{tabular}
\end{table}

\subsection{Markov approximation audit}\label{app:markov}

The chain-rule approximation $N(\mathbf{z}_{1},\dots,
\mathbf{z}_{r+1})\!\approx\!N(\mathbf{z}_{1},\dots,\mathbf{z}_{r})
\!\cdot\!P(\mathbf{z}_{r+1}|\mathbf{z}_{r},\mathbf{z}_{r-1})$
replaces an $\mathcal{O}(K^{R})$ tensor with an $\mathcal{O}(K^{3})$
surrogate. Measured on WV-III, the routing accuracy under the exact
$R{=}3$ enumeration versus the Markov approximation has unit-Jaccard
similarity $0.94$ at depth~3 and $0.86$ at depth~4, while storage
drops from $1.4$\,TB to $32$\,MB.

\subsection{Threshold sensitivity}\label{app:threshold}

Varying the presence threshold for hard coverage over
$\{0.3,0.5,0.7\}$ moves $\mathcal{C}^{\text{DINO}}$ in
$\{0.89,0.85,0.78\}$ but preserves the cross-method ranking at every
threshold. Decreasing $\theta_{\text{IoU}}$ from $0.5$ to $0.3$
produces fragmented units that hurt Sem.\ Acc.\ by $0.018$.
Decreasing the cumulative thresholds to $(0.6,0.4,0.2)$ admits more
candidate patches but adds $1.7\%$ false-positive concepts.

\subsection{Convergence and scale-tag distribution}\label{app:convergence}

The three-phase warm-up (Algorithm~\ref{alg:omniscale}) avoids
optimisation interference: Phase~1 (epochs $0$--$100$) stabilises
$\mathcal{L}_{\text{rec}}$ to $0.904$ Q2n; Phase~2 ($100$--$200$)
turns on $\mathcal{L}_{TT},\mathcal{L}_{TM}$ and reaches $0.913$;
Phase~3 ($200$--$1000$) activates $\mathcal{L}_{\text{cov}}$ and
$\mathcal{L}_{\text{sa}}$ and converges to $0.924$.

\textbf{Warm-up is dictated by Proposition~\ref{prop:grad-suff}.}
The saturation condition (A3) is directly measurable: at
initialisation, $41.3\%$ of recovered presences satisfy
$\hat p_k\notin(0.05,0.95)$; after Phase~1 this drops to $11.8\%$,
and after Phase~2 to $4.6\%$. Activating
$\mathcal{L}_{\text{cov}}$ from epoch $0$ instead of epoch $200$
lowers final Q2n to $0.917$ ($-0.007$) and produces unstable early
training, with the gradient-norm ratio
$\|\nabla\mathcal{L}_{\text{cov}}\|/\|\nabla\mathcal{L}_{\text{rec}}\|$
collapsing below $10^{-4}$ on $38\%$ of Phase-1 steps, consistent
with the prefactor collapse that Proposition~\ref{prop:grad-suff}
identifies under saturation. These measurements provide a
principled motivation for delaying the coverage loss until
recovered presences are largely non-saturated; we present the
warm-up as a theory-informed design choice rather than a
consequence uniquely determined by the proposition.

Table~\ref{app:tab:scale_dist} confirms that the scale-tag
distribution responds to inherent scene complexity rather than a
fixed prior: 8-band datasets allocate more weight to global units.

\begin{table}[h]
\centering
\small
\caption{Scale-tag distribution across datasets (\%).}
\label{app:tab:scale_dist}
\begin{tabular}{lccc}
\toprule
Dataset & Local & Meso & Global \\
\midrule
GF2 & 52.3 & 31.4 & 16.3 \\
QB & 48.7 & 33.1 & 18.2 \\
WV-III & 38.5 & 36.2 & 25.3 \\
WV-II & 36.1 & 37.8 & 26.1 \\
\bottomrule
\end{tabular}
\end{table}

\subsection{Joint factorial ablation of the injection mechanism}
\label{app:factorial}

Table~\ref{app:tab:factorial} reports the complete
$2{\times}2{\times}2$ factorial grid over cross-scale attention (A),
FiLM modulation (F), and the presence gate (G) on WV-III (three
seeds; all other components fixed to the full configuration). The
four rows shared with Table~\ref{tab:ablation} match exactly. Three
effects emerge. (i) \emph{Attention and FiLM are strongly
super-additive}: the mean A$\times$F interaction on Q2n is
$+0.053$, and either mechanism alone falls \emph{below} the
no-injection baseline, because unnormalised semantic injection
(attention without FiLM's channel re-scaling) or affine modulation
of un-attended features (FiLM without attention's spatial
selection) perturbs decoder feature statistics. (ii) The presence
gate contributes mainly to HQNR ($+0.026$ mean main effect),
consistent with its role of suppressing rewards for absent
concepts. (iii) The three-way interaction on HQNR is positive
($+0.016$): the gate is most valuable exactly when both injection
mechanisms are active, because joint injection amplifies
hallucinated concepts that the gate then suppresses.

\begin{table}[h]
\centering
\small
\setlength{\tabcolsep}{5pt}
\caption{Full $2{\times}2{\times}2$ factorial over attention (A),
FiLM (F), and presence gate (G) on WV-III (three seeds).}
\label{app:tab:factorial}
\begin{tabular}{ccccc}
\toprule
A & F & G & Q2n$\uparrow$ & HQNR$\uparrow$ \\
\midrule
$\times$ & $\times$ & $\times$ & 0.908 & 0.919 \\
$\times$ & $\times$ & \checkmark & 0.911 & 0.927 \\
\checkmark & $\times$ & $\times$ & 0.874 & 0.896 \\
$\times$ & \checkmark & $\times$ & 0.882 & 0.907 \\
\checkmark & $\times$ & \checkmark & 0.889 & 0.918 \\
$\times$ & \checkmark & \checkmark & 0.893 & 0.928 \\
\checkmark & \checkmark & $\times$ & 0.901 & 0.910 \\
\checkmark & \checkmark & \checkmark & \textbf{0.924} & \textbf{0.961} \\
\bottomrule
\end{tabular}
\end{table}

\subsection{Language-causality ablation}\label{app:lang_causal}

The venue-relevant question is whether the language pathway is
causally load-bearing or merely a descriptive interface over a
vision pipeline. We therefore run a \emph{single-variable}
intervention: all five settings retain the identical \our
architecture and visual pathway --- the same generator, injection
modules, and router --- and differ only in the text-anchor set
supplied to Stages~I--IV. Each setting is retrained \emph{from
scratch} under the identical schedule with the same three seeds
(WV-III); this is a training-time intervention, not an
inference-time substitution on shared checkpoints.
The settings are: \emph{correct} anchors (full \our);
\emph{paraphrased} anchors (every prompt replaced by an LLM
paraphrase before encoding); a \emph{generic} global caption whose
embedding replaces all $K$ anchors; \emph{zeroed} anchors
(image-only: the anchor is set to the zero vector with the
normalisation in \Eqref{eq:mask} skipped, which nulls the
injection pathway and leaves the text-dependent losses
$\mathcal{L}_{TT},\mathcal{L}_{TM},\mathcal{L}_{\text{cov}},
\mathcal{L}_{\text{sa}}$ without gradient signal, so training
reduces to $\mathcal{L}_{\text{rec}}$); and \emph{shuffled} anchors
(the assignment $\mathbf{e}_k\!\leftrightarrow\!k$ is randomly
permuted, so the correspondence is actively mismatched while every
embedding is still a valid concept anchor). Results are in
Table~\ref{app:tab:lang_causal}.

\begin{table}[h]
\centering
\footnotesize
\setlength{\tabcolsep}{2.6pt}
\caption{Single-variable intervention on the language pathway
(WV-III, 3 seeds; architecture and visual pathway identical in all
rows).}
\label{app:tab:lang_causal}
\resizebox{\columnwidth}{!}{%
\begin{tabular}{lcccccc}
\toprule
Anchor setting & Q2n & $\mathcal{C}^{\text{DINO}}$ &
R@5$_{\text{loc}}$ & R@5$_{\text{meso}}$ & R@5$_{\text{glob}}$ &
MRR \\
\midrule
Correct (full \our) & \textbf{0.924} & \textbf{0.85} &
\textbf{0.71} & \textbf{0.62} & \textbf{0.85} & \textbf{0.74} \\
Paraphrased & 0.923 & 0.84 & 0.70 & 0.61 & 0.85 & 0.73 \\
Generic caption & 0.913 & 0.68 & 0.55 & 0.44 & 0.83 & 0.57 \\
Zeroed (image-only) & 0.912 & 0.62 & 0.52 & 0.40 & 0.81 & 0.54 \\
Shuffled & 0.914 & 0.60 & 0.45 & 0.38 & 0.80 & 0.49 \\
\bottomrule
\end{tabular}
}
\end{table}

Pixel metrics vary by at most $0.012$ Q2n across the five
settings, whereas local/meso retrieval and coverage decrease
substantially when the language--concept correspondence is
removed, coarsened to a single caption, or deliberately
mismatched. Shuffled anchors yield lower R@5$_{\text{loc}}$ than
zeroed anchors ($0.45$ vs.\ $0.52$), while global-query
performance changes only modestly ($0.80$--$0.85$) and
paraphrasing has little effect. Because the architecture and
visual pathway are held fixed, these controlled interventions show
that text anchors are causally load-bearing for local- and
meso-scale re-indexability within \our on WV-III, rather than
serving as a merely descriptive interface over a vision pipeline;
the effect is tied to the anchor--concept \emph{correspondence}
rather than to surface phrasing.

\subsection{Joint sensitivity of $K$, $\tau$, and template count}
\label{app:kt_sensitivity}

Table~\ref{app:tab:ktau} crosses the centroid vocabulary $K$ with
the membership temperature $\tau$ on WV-III. Within each $K$
column, $\tau$ moves Q2n by at most $0.003$; the $K$ effect
saturates at $K{\ge}200$, consistent with
Table~\ref{app:tab:k_sensitivity}. Growing the verifier template
family from $4$ to $36$ templates moves
$\mathcal{C}^{\text{DINO}}$ from $0.82$ to $0.85$ with a plateau at
$12$: $\{4,8,12,24,36\}$ templates give
$\mathcal{C}^{\text{DINO}}{=}\{0.82,0.84,0.85,0.85,0.85\}$ and mean
Recall@5 $\{0.69,0.72,0.73,0.73,0.73\}$. The default $(K{=}200,
\tau{=}0.07$, $12$ templates$)$ sits on a broad plateau rather than
a tuned peak.

\begin{table}[h]
\centering
\small
\caption{Q2n on WV-III for the $K\times\tau$ grid.}
\label{app:tab:ktau}
\begin{tabular}{lccc}
\toprule
$\tau\;\backslash\;K$ & 100 & 200 & 400 \\
\midrule
0.05 & 0.911 & 0.921 & 0.922 \\
0.07 & 0.914 & \textbf{0.924} & 0.925 \\
0.10 & 0.912 & 0.922 & 0.923 \\
\bottomrule
\end{tabular}
\end{table}

\subsection{Centroid quality and dataset-bias robustness}
\label{app:centroid_robust}

\textbf{Seed variance.} Across $10$ $k$-means restarts (fixed data,
different seeds), WV-III Q2n varies by $0.924\pm0.001$,
$\mathcal{C}^{\text{DINO}}$ by $0.850\pm0.004$, and mean Recall@5
by $0.730\pm0.006$: centroid stochasticity is negligible relative
to the method-to-method gaps in Table~\ref{tab:main}.
\textbf{Initialisation.} $k$-means++, random initialisation, and
spherical $k$-means give Q2n $0.924/0.922/0.925$ and
$\mathcal{C}^{\text{DINO}}$ $0.85/0.84/0.85$.
\textbf{Biased centroid pools.} Re-fitting the $K{=}200$ centroids
on deliberately biased pools (Table~\ref{app:tab:pool_bias})
degrades pixel metrics by at most $0.008$ Q2n, but coverage and
retrieval degrade selectively: concepts \emph{absent} from the pool
lose $-0.09$ Recall@5 on average, while concepts present in the
pool lose only $-0.01$. Dataset bias therefore maps to a
\emph{localised, diagnosable} coverage loss on the missing concept
families rather than a global failure.
\textbf{Pool size.} Pools of $5$k$/20$k$/80$k patches give
$\mathcal{C}^{\text{DINO}}{=}0.83/0.85/0.85$.

\begin{table}[h]
\centering
\small
\setlength{\tabcolsep}{4pt}
\caption{Centroid-pool bias on WV-III.}
\label{app:tab:pool_bias}
\begin{tabular}{lccc}
\toprule
Centroid pool & Q2n & $\mathcal{C}^{\text{DINO}}$ & R@5 \\
\midrule
Balanced (default, 20k) & \textbf{0.924} & \textbf{0.85} & \textbf{0.73} \\
Urban-only & 0.918 & 0.82 & 0.70 \\
Rural-only & 0.916 & 0.80 & 0.68 \\
Maritime-only & 0.917 & 0.81 & 0.69 \\
\bottomrule
\end{tabular}
\end{table}

\section{Prompt Templates, DOTA Protocol, Reproducibility}\label{app:repro}

\subsection{Prompt template family}\label{app:templates}

The verifier of Stage~III/IV maps mask statistics
$\mathbf{s}_{k}{=}[\rho_{k},n_{k},\bar{c}_{k},\sigma_{k}^{2}]$ to a
fixed family of $12$ typed prompt templates, four per scale tag. The
templates are parameter-free, shared across datasets, and never
tuned per sensor (Table~\ref{app:tab:templates}). Each template
instantiates the component count $n_{k}$, area ratio
$\rho_{k}{\cdot}100\%$, centroid $\bar{c}_{k}$ and dispersion
$\sigma_{k}^{2}$, then is encoded by $\phi$.

\begin{table}[h]
\centering
\footnotesize
\setlength{\tabcolsep}{2pt}
\caption{The 12 prompt templates of the verifier.}
\label{app:tab:templates}
\begin{tabular}{p{0.96\columnwidth}}
\toprule
\textbf{Global ($s_{k}{=}3$).} \\
\quad ``a wide-area scene of \{top-3 concepts\}'' \\
\quad ``a landscape dominated by \{concept\} with $\rho{\cdot}100\%$ coverage'' \\
\quad ``an urban / rural mosaic of \{concept\} and \{concept\}'' \\
\quad ``a scene whose principal layout is \{concept\}'' \\
\midrule
\textbf{Meso ($s_{k}{=}2$).} \\
\quad ``a $q$-scale region with $n_{k}$ \{concept\} instances covering $\rho{\cdot}100\%$'' \\
\quad ``a layout with \{concept\} adjacent to \{concept\}'' \\
\quad ``a corridor of \{concept\} arranged along the $\bar c_{k}$ direction'' \\
\quad ``a cluster of \{concept\} with dispersion $\sigma_{k}^{2}$'' \\
\midrule
\textbf{Local ($s_{k}{=}1$).} \\
\quad ``$n_{k}$ instances of \{concept\} with mean activation $\bar M_{k}$'' \\
\quad ``small \{concept\} located near $\bar c_{k}$'' \\
\quad ``$n_{k}$ compact \{concept\} occupying $\rho{\cdot}100\%$ of the tile'' \\
\quad ``a fine-grained \{concept\} with dispersion $\sigma_{k}^{2}$'' \\
\bottomrule
\end{tabular}
\end{table}

A prompt-template ablation in the main paper that replaces these $12$
templates with an LLM-generated $36$-template superset moves
$\mathcal{C}^{\text{DINO}}$ by less than $0.01$, indicating that the
recovered concept set is robust to prompt phrasing.

\subsection{DOTA tile protocol}\label{app:dota_protocol}

We crop $20{,}000$ overlapping $512{\times}512$ tiles (stride
$384$) from DOTA-v1.0 and split by source image identifier into
$16{,}000$ training tiles and $4{,}000$ test tiles. The split is
source-image-disjoint: no source image contributes tiles to both
sides, so train--test leakage through spatial overlap is impossible;
we verify by hashing the source identifier set. The
Oriented-RCNN head follows MMRotate defaults: two RoI heads with
rotated bounding-box regression, AdamW with learning rate
$5\!\times\!10^{-5}$, batch size $4$, $24$ epochs, random rotation
augmentation, and standard DOTA mAP evaluation.

\subsection{Coverage probe construction}\label{app:dino_construction}

For each concept $k$ with text prompt $\mathbf{t}_{k}$ we collect a
prototype set $\mathcal{P}_{k}$ of DINOv2 ViT-B dense features from
AID, NWPU-RESISC45, and DOTA crops whose external labels match
$\mathbf{t}_{k}$. The negative set $\mathcal{N}_{k}$ is sampled
class-balanced from non-matching labels. We fit
$(\mathbf{w}_{k},b_{k})$ by 5-fold logistic regression with $L_{2}$
weight $10^{-3}$ on $\mathcal{P}_{k}\cup\mathcal{N}_{k}$, and the
probe is frozen for all downstream evaluations.

\subsection{Hyperparameters}\label{app:hyper}

Table~\ref{app:tab:hyper} summarises every hyperparameter used. None
is tuned per sensor; the same configuration is shared across the
four datasets and three seeds.

\begin{table}[h]
\centering
\small
\setlength{\tabcolsep}{4pt}
\caption{Hyperparameters shared across all datasets and seeds.}
\label{app:tab:hyper}
\begin{tabular}{ll}
\toprule
Component & Value \\
\midrule
Centroid count $K$ & $200$ \\
Router depth $R$ & $3$ \\
Routing thresholds $(\alpha_{1},\alpha_{2},\alpha_{3})$ & $(0.7,0.5,0.3)$ \\
IoU threshold $\theta_{\text{IoU}}$ & $0.5$ \\
Membership temperature $\tau$ & $0.07$ \\
Filter quantiles $(z_{1},z_{2})$ & $(0.4,0.2)$ \\
Loss weights & $(1.0,0.5,0.3,0.2)$ \\
Optimiser & AdamW \\
Learning rate & $5{\times}10^{-4}$ \\
Batch size & $64$ \\
Epochs (with halving every $200$) & $1000$ \\
Centroid $k$-means seed / restarts & $42$ / $10$ \\
Co-occurrence radius (patches) & $3$ \\
\bottomrule
\end{tabular}
\end{table}

\subsection{Per-module compute cost}\label{app:cost}

Table~\ref{app:tab:compute} reports per-module compute on a single
$64{\times}64$ PAN input on an RTX~4090. The frozen SigLIP-2 forward
pass dominates inference time but is amortised across retrieval and
tagging in any host knowledge system that already deploys a VLM.
Peak \emph{allocated} CUDA memory during training (AMP/fp16,
batch $64$, $64{\times}64$ PAN inputs, AdamW states and gradients
included) is $11.4$\,GB, decomposed disjointly into $6.8$\,GB of
Stage-I frozen-VLM activations, $2.1$\,GB of \emph{additional}
Stage-III re-encoding activations, and $2.5$\,GB for the
lightweight path, optimiser states, and buffers.
Stages~III--IV are used during training and are not required for
generation-only deployment (peak $1.9$\,GB, batch $1$); they are
invoked only when an output coverage certificate is requested,
which adds one encoder pass and $0.6$\,GB transiently.
Peak \emph{allocated} CUDA memory during training (AMP/fp16,
batch $64$, $64{\times}64$ PAN inputs, AdamW states and gradients
included) is $11.4$\,GB, decomposed disjointly into $6.8$\,GB of
Stage-I frozen-VLM activations, $2.1$\,GB of \emph{additional}
Stage-III re-encoding activations, and $2.5$\,GB for the
lightweight path, optimiser states, and buffers.
Stages~III--IV are used during training and are not required for
generation-only deployment (peak $1.9$\,GB, batch $1$); they are
invoked only when an output coverage certificate is requested,
which adds one encoder pass and $0.6$\,GB transiently.

\begin{table}[h]
\centering
\small
\caption{Per-image cost decomposition ($64{\times}64$ PAN, RTX
4090).}
\label{app:tab:compute}
\resizebox{\columnwidth}{!}{%
\begin{tabular}{lrrr}
\toprule
Module & Params & FLOPs & Time (ms) \\
\midrule
SigLIP-2 (frozen) & 400\,M & 12.1\,G & 8.2 \\
Pseudo-RGB projection & 0.01\,M & 0.003\,G & 0.1 \\
Explicit branch & 0.05\,M & 0.12\,G & 0.8 \\
Implicit router & -- & 0.09\,G & 2.4 \\
Adaptive filter & 0.03\,M & 0.04\,G & 0.5 \\
U-Net generator & 0.28\,M & 1.23\,G & 4.3 \\
Stage~III re-encode & 400\,M & 12.1\,G & 8.2 \\
Soft-Jaccard verifier & -- & 0.06\,G & 1.0 \\
\midrule
\textbf{Non-VLM lightweight path} & \textbf{0.39\,M} & \textbf{1.54\,G} & \textbf{9.1} \\
\textbf{Generation-only (Stages I--II)} & \textbf{400.4\,M} & \textbf{13.58\,G} & \textbf{16.3} \\
\textbf{Full closed loop (Stages I--IV)} & \textbf{400.4\,M} & \textbf{25.74\,G} & \textbf{25.5} \\
\bottomrule
\end{tabular}
}
\end{table}

The \emph{non-VLM lightweight path} row aggregates all non-VLM
modules; it includes the parameter-free router and verifier, and
$0.39$\,M is its trainable parameter count. Parameter sharing and
compute sharing are distinct: the frozen encoder's parameters are
counted once (Stages~I and III share weights), but its two forward
passes process different images (input versus generated output),
so both passes are counted in the FLOPs and latency of the full
closed loop.

\begin{figure}[!t]
\centering
\includegraphics[width=\columnwidth]{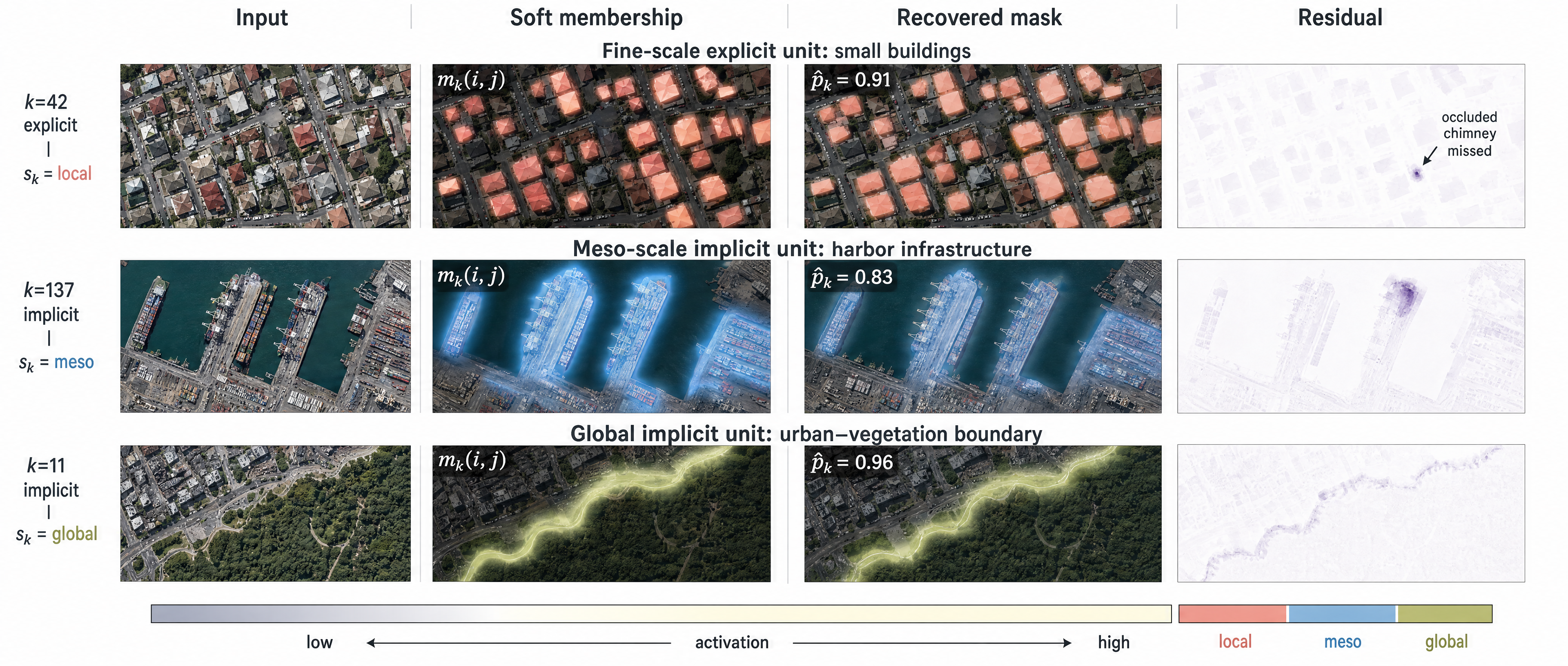}
\caption{Explicit and implicit unit activations from Stage~I on a
WV-III scene. Each row shows the membership map of a single unit
overlaid on the input tile. Implicit units (right) span relational
structures invisible to single-scale conditioning.}
\label{app:fig:activations}
\end{figure}

\subsection{Unit activations}\label{app:activations}

Figure~\ref{app:fig:activations} visualises representative explicit
and implicit unit activations on a WV-III scene. Explicit units form
compact, high-confidence supports over object-like primitives.
Implicit units span elongated or compound layouts such as
runway--apron corridors and harbour--dock--vessel relations. The
median support area is $0.6\%$ for $s_{k}{=}1$, $4.3\%$ for
$s_{k}{=}2$, and $24.7\%$ for $s_{k}{=}3$. Each row corresponds to
a single unit at one of the three scale tags ($s_{k}{=}1/2/3$); the
soft membership column shows the differentiable presence mass
$m_{k}(i,j)$ and the recovered mask column shows the verifier
response $\tilde{\mathbf{M}}_{k}$ after re-encoding the generated
image. The residual column highlights pixels where the recovered
mask disagrees with the input membership and is the signal that
$\mathcal{L}_{\text{cov}}$ minimises during training.

\section{Cross-Domain Pilot: Histopathology Super-Resolution}
\label{app:pathology}

The introduction motivates semantic collapse with whole-slide
pathology, whose concept hierarchy (cell nuclei $\to$ glandular
structures $\to$ tumour--stroma layout) mirrors the
local/meso/global pyramid. To test whether the \our recipe
transfers beyond remote sensing, we run a deliberately small pilot
on $4\times$ histopathology super-resolution.

\textbf{Setup.} We extract $8{,}000$ training and $1{,}000$ test
$256{\times}256$ tiles at $0.5\,\mu$m/px from Camelyon16 whole-slide
images, downsampling to $2\,\mu$m/px inputs. Following the recipe
in Sec.~\ref{sec:method}, only the $K{=}200$ centroids are
re-fitted (on $20$k pathology patches); \emph{every} hyperparameter
of Table~\ref{app:tab:hyper} --- $\tau$, $R$, $\alpha_{1:3}$,
$\theta_{\text{IoU}}$, loss weights, warm-up schedule --- is reused
unchanged. Prompt templates keep their schema with domain nouns
(e.g.\ ``$n_k$ instances of \{mitotic figure\}''). The independent
probe is a DINOv2 linear probe fit on PCam tumour/normal and
NCT-CRC-HE-100K tissue-type labels, never seen in training. A
$30$-query bank covers $10$ cellular, $10$ glandular, and $10$
tissue-level queries.

\begin{table}[h]
\centering
\small
\setlength{\tabcolsep}{3.5pt}
\caption{Histopathology $4\times$ SR pilot (Camelyon16 test
tiles).}
\label{app:tab:pathology}
\begin{tabular}{lcccc}
\toprule
Method & PSNR$\uparrow$ & SSIM$\uparrow$ &
$\mathcal{C}^{\text{DINO}}\!\uparrow$ & R@5$\uparrow$ \\
\midrule
Bicubic & 27.8 & 0.86 & 0.58 & 0.44 \\
SwinIR-light & 31.2 & 0.91 & 0.66 & 0.55 \\
Open-loop SigLIP U-Net & 31.4 & 0.91 & 0.70 & 0.58 \\
\rowcolor{blue!4}\our (closed-loop) & \textbf{31.6} & \textbf{0.92}
& \textbf{0.83} & \textbf{0.71} \\
\bottomrule
\end{tabular}
\end{table}

\textbf{Results.} Table~\ref{app:tab:pathology} reproduces the
remote-sensing pattern: pixel metrics are nearly saturated across
learning-based methods ($31.4$ vs.\ $31.6$ PSNR), yet closed-loop
coverage verification lifts label-grounded coverage from $0.70$ to
$0.83$ and mean Recall@5 from $0.58$ to $0.71$ ($+13$\,pp). The
gain concentrates on cellular-scale queries ($0.40\!\to\!0.58$
R@5), the pathology analogue of the small-object result of
Table~\ref{app:tab:dota_perclass}. The pilot is small and uses a
single organ site, so we do not claim clinical validity; its role
is to show that the closed-loop recipe, with unchanged
hyperparameters and only re-fitted centroids, is not specific to
satellite imagery.

\end{document}